\providecommand{\algorithmname}{Algorithm}
\newtheorem{prop}{Proposition}[section]
\newcounter{hypA}
\newenvironment{hypA}{\refstepcounter{hypA}\begin{itemize}
  \item[({\bf A\arabic{hypA}})]}{\end{itemize}}
\begin{document}

\begin{center}

{\Large \textbf{Approximate Bayesian Computation for a Class of Time Series Models}}

\bigskip

BY AJAY JASRA 

{\footnotesize Department of Statistics \& Applied Probability,
National University of Singapore, Singapore, 117546, SG.}\\
{\footnotesize E-Mail:\,}\texttt{\emph{\footnotesize staja@nus.edu.sg}}
\end{center}

\begin{abstract}
In the following article we consider approximate Bayesian computation (ABC) for certain classes of time series models.
In particular, we focus upon scenarios where the likelihoods of the observations and parameter are intractable, by which we mean that 
one cannot evaluate the likelihood even up-to a positive unbiased estimate. This paper reviews and develops a class of approximation
procedures based upon the idea of ABC, but, specifically maintains the probabilistic structure of the original statistical model. This idea is
useful, in that it can facilitate an analysis of the bias of the approximation and the adaptation of established computational methods for
parameter inference. Several existing results in the literature are surveyed and novel developments with regards to computation are given.
\\
\textbf{Keywords}: Approximate Bayesian Computation; Hidden Markov Model; Observation Driven Time Series.
\end{abstract}

\section{Introduction} 

Consider an observable, discrete-time stochastic process $\{Y_n\}_{n\geq 1}$, $Y_n\in\mathsf{Y}\subseteq\mathbb{R}^{d_y}$, a latent and unobserved discrete-time stochastic process $\{X_n\}_{n\geq 1}$,
$X_n\in\mathsf{X}\subseteq\mathbb{R}^{d_x}$, a parameter $\theta\in\Theta\subseteq\mathbb{R}^{d_{\theta}}$ and i.i.d.~sequence of random variables $\{\phi_n\}_{n\geq 1}$, $\phi_n\in\Phi\subseteq\mathbb{R}^{d_{\phi}}$, whose distribution can depend upon $\theta$ ($d_y,d_x,d_{\theta},d_{\phi}\in\mathbb{N}$). This article is concerned with the class
of statistical models, for $ n\geq 1$, $\varphi_{n,\theta}:\mathbb{N}\times\Theta\times\mathsf{Y}^{n-1}\times\mathsf{X}\rightarrow\mathsf{Y}$
$$
Y_n = \varphi_{n,\theta}(y_{1:n-1},x_n,\phi_n)
$$
where $y_{1:n-1}:=(y_1,\dots,y_{n-1})$, $n\geq 2$, $y_{1:-1}$ is assumed null, 
which induces a joint Lebesgue (assumed for simplicty of presentation) density of the model
\begin{equation}
p_{\theta}(y_{1:n},x_{1:n}) = \Big(\prod_{i=1}^n p_{\theta}(y_i|y_{1:i-1},x_i)\Big)p_{\theta}(x_{1:n})\label{eq:model}
\end{equation}
where we use $p_{\theta}(\cdot)$ to denote conditional and joint probability densities w.r.t.~Lebesgue measure and $P_{\theta}(\cdot)$ will denote the associated distribution. This collection of models is rather flexible
and contains:
\begin{enumerate}
\item{I.I.D.~models, when $x_{1:n}$ is null and $\varphi_{n,\theta}$ does not depend on $y_{1:n-1}$}
\item{Observation-driven time series models (ODTS) (e.g.~\cite{cox}),  when $x_{1:n}$ is null}
\item{Hidden Markov models (HMMs) (e.g.~\cite{cappe}), when $x_{1:n}$ follows a Markov chain and $\varphi_{n,\theta}$ does not depend on $y_{1:n-1}$}
\item{Some Non-Linear time series models (e.g.~\cite{fan}), such as causal and invertible bilinear time series models \cite{granger,liu} (when $x_{1:n}$ is null).}
\end{enumerate}
The class of models is quite large and includes many popular classes including some GARCH models (\cite{bollerslev}), stochastic volatility models (e.g.~\cite{kim}) and partially observed Markov jump 
process models (e.g.~\cite{golightly}). The list of applications is numerous ranging from economics, biology and engineering; the reader is referred to the afore mentioned references for specific applications and details.

The particular scenario of interest in this article is when one can exactly  sample the process $\{\phi_n\}_{n\geq 1}$ and evaluate for $ n\geq 1$, $\varphi_{n,\theta}$,
but that the density function $p_{\theta}(y_i|y_{1:i-1},x_i)$ is unknown up-to a positive and unbiased estimate (this can occur see \cite{jacob}); we will call the likelihood intractable. 
If  the density function $p_{\theta}(y_i|y_{1:i-1},x_i)$ is known up-to a positive and unbiased estimate
and suppose that this estimate is
$\prod_{i=1}^n\hat{p}_{\theta}(y_i|y_{1:i-1},x_i,Z_i)$ for some random variables $\{Z_n\}_{n\geq 1}$ where $\mathbb{E}[\prod_{i=1}^n\hat{p}_{\theta}(y_i|y_{1:i-1},x_i,Z_i)]=
\prod_{i=1}^n p_{\theta}(y_i|y_{1:i-1},x_i)$ (the expectation is w.r.t.~the law of $\{Z_n\}_{n\geq 1}$), then the approximation schemes we will discuss are not always needed as 
exact (possibly Monte Carlo-based) inference procedures are possible. Later in the article, we will consider cases when $p_{\theta}(x_{1:n})$ is intractable in some way, but for now,
we will assume, when it is in the model, that this density is known point-wise up-to a constant. We also remark, in Section \ref{sec:approx}, we will show that if one can evaluate
the density of the noise terms $p_{\theta}(\phi_i)$ then, indeed one does not require the ability to sample the $\{\phi_n\}$.
 For the class of problems of interest, for either likelihood-based or Bayesian inference
methods, the complexity of the model is such that even when using advanced methods such Markov chain Monte Carlo (MCMC) or sequential Monte Carlo (SMC) exact statistical
inference is seldom possible. One of the standard solutions to this problem is to introduce an approximation of the statistical model and an often adopted approach is that of using 
Approximate Bayesian Computation, especially when taking a Bayesian perspective to statistical estimation; see for instance \cite{marin} for a recent overview.

Before continuing, it is noted that the scenario of intractable likelihoods will occur in many applications. Perhaps the most common is when $\{\phi_n\}_{n\geq 1}$ is distributed according to a model
which can be simulated, but $p_{\theta}(\phi)$ is not known; a good example is the stable distribution - this would find applications for i.i.d.~models and HMMs (see \cite{peters,yildrlim}) applied in finanical
contexts where the heavy tails of these distributions are a realistic modelling choice. Other models used in practice include the Lotka-Voltera model \cite{boys} used for stochastic-kinetic networks and the $g-and-k$ distribution used for `non-standard' data \cite{haynes}.

\subsection{Approximate Bayesian Computation}

Suppose that one places a prior density $\pi$ on $\theta\in\Theta$, the standard ABC approximation (see e.g.~\cite{pritchard}) of the posterior associated to the likelihood in \eqref{eq:model}:
$$
\pi(\theta,x_{1:n}|y_{1:n}) \propto p_{\theta}(y_{1:n},x_{1:n})\pi(\theta)
$$
is to take, for $\epsilon>0$:
\begin{equation}
\pi^{\epsilon}(\theta,x_{1:n},u_{1:n}|y_{1:n}) \propto \mathbb{I}_{\{u_{1:n}\in\mathsf{Y}^n:\mathsf{d}(s(u_{1:n}),s(y_{1:n}))<\epsilon\}}(u_{1:n})p_{\theta}(u_{1:n},x_{1:n})\pi(\theta) \label{eq:abc_post}
\end{equation}
where $u_{1:n}\in\mathsf{Y}^n$ are \emph{auxiliary} data, $p_{\theta}(u_{1:n},x_{1:n})$ is as \eqref{eq:model}, $S:\mathsf{Y}^n\rightarrow\mathbb{R}^{d_s}$ is a \emph{summary} of the data, $\mathsf{d}:\mathbb{R}^{2d_s}\rightarrow\mathbb{R}_+$ is \emph{distance} on the summary statistics. If the summary statistics are sufficient for the data, then one can show (under minimal assumptions) that as $\epsilon \downarrow 0$ that the (marginal) ABC posterior $\pi^{\epsilon}(\theta,x_{1:n}|y_{1:n})$ is in some sense exactly $\pi(\theta,x_{1:n}|y_{1:n})$. 
It is remarked that if $\mathsf{Y}$ is countably infinite (we assume this is not the case), then it is possible to take $\epsilon=0$ and if $S$ is sufficient, this leads to \emph{exact} inference as proposed in \cite{rubin}.
We note that the indicator function is used in \eqref{eq:abc_post}, but it can be replaced with a kernel density; we focus upon indicator functions in this article. The approximation \eqref{eq:abc_post} is particularly amenable to inference, in that, in the framework of this article
one can sample $u_{1:n}|x_{1:n},\theta$ exactly from the true model; this fact is very useful in that it would allow one to apply a number of computational algorithms such as rejection sampling, importance sampling, MCMC or SMC (this explained in detail later in the article). The problems with this approximation include:
\begin{itemize}
\item{If the statistics $s$ are not sufficient, one does not always recover the exact posterior, even if $\epsilon=0$. This can make it difficult to characterize, mathematically, the bias in approximation. Selecting summary statistics can then be challenging, although, there are approaches such as \cite{fearnhead}.}
\item{In many practical models, there is a trade-off between making $\epsilon$ small (accuracy of the approximation) and allowing a computational method to work well (accuracy in computation). The approximation \eqref{eq:abc_post} does not often help this former trade-off due to the factor $\mathbb{I}_{\{u_{1:n}\in\mathsf{Y}^n:\mathsf{d}(s(u_{1:n}),s(y_{1:n}))<\epsilon\}}(u_{1:n})$, which may mean that $\epsilon$ has to be relatively large for a computational algorithm to produce reliable results. For example in a rejection algorithm which samples $\theta,x_{1:n},u_{1:n}$ from $\pi(\theta)p_{\theta}(u_{1:n},x_{1:n})$  the acceptance probability is $\mathbb{I}_{\{u_{1:n}\in\mathsf{Y}^n:\mathsf{d}(s(u_{1:n}),s(y_{1:n}))<\epsilon\}}(u_{1:n})$; if $n$ is large one may need to make $\epsilon$ quite large to yield reasonable acceptance rates.}
\item{In the context of this article, \eqref{eq:abc_post} does not retain the probabilistic structure of the model. For example one does not have
$$
\pi^{\epsilon}(\theta,x_{1:n}|y_{1:n}) \propto \Big(\prod_{i=1}^n 
p_{\theta}^{\epsilon}(y_i|y_{1:i-1},x_i)
\Big)
p_{\theta}(x_{1:n}) \pi(\theta) 
$$
for some probability density $p_{\theta}^{\epsilon}(y_i|y_{1:i-1},x_i)$. The issue with this is that the techniques of inference for the original model (e.g.~for HMMs one often uses SMC methods) cannot be used, without modification, which creates an additional difficulty. Another issue is that a theoretical study of the ABC approximation may be far more difficult than is the case when the probabilistic structure of the model is retained. Although there are several results in the literature (e.g.~\cite{blum}) one could argue that a more precise analysis could be undertaken by retaining the structure of the model.}
\end{itemize}
The points here have been made in many other articles including \cite{chopin1,bart,ehrlich,jasra1,jasra}. Thus our objective is to use a different ABC approximation which, to an extent, can deal with some of the deficiencies raised here. We note that there are a great deal of extensions of the standard ABC approach (such as regression adjustment \cite{beaumont}), but the argument made in this article, is that the approximation to be reviewed and developed here, is very reasonable, especially for the case when $d_y$ is low to moderate.

The ABC-based approximation we consider is as follows, which can deal with some of the issues outlined above
$$
\pi^{\epsilon}(\theta,x_{1:n}|y_{1:n}) \propto p_{\theta}^{\epsilon}(y_{1:n},x_{1:n})\pi(\theta)
$$
where 
\begin{equation}
p_{\theta}^{\epsilon}(y_{1:n},x_{1:n}) = \Big(\prod_{i=1}^n 
p_{\theta}^{\epsilon}(y_i|y_{1:i-1},x_i)
\Big)
p_{\theta}(x_{1:n})\label{eq:abc}
\end{equation}
and
$$
p_{\theta}^{\epsilon}(y_i|y_{1:i-1},x_i)
= \frac{\int_{\mathsf{Y}} \mathbb{I}_{\{u_i\in\mathsf{Y}:\mathsf{d}(u_i,y_i)<\epsilon\}}(u_i)p_{\theta}(u_i|y_{1:i-1},x_i)du_i}{\int_{\{u\in\mathsf{Y}:\mathsf{d}(u,y_i)<\epsilon\}}du}.
$$
Typically, one needs to be able to evaluate the ABC posterior pointwise up-to a normalizing constant (for example for Monte Carlo algorithms) which is often not the case for 
$\pi^{\epsilon}(\theta,x_{1:n}|y_{1:n})$. \emph{One particular choice} is to define an ABC posterior on an extended state-space from $\Theta\times\mathsf{X}^n$ to $\Theta\times\mathsf{X}^n\times\mathsf{Y}^n$, setting
\begin{eqnarray*}
\pi^{\epsilon}(\theta,x_{1:n},u_{1:n}|y_{1:n}) & \propto & p_{\theta}^{\epsilon}(y_{1:n},u_{1:n},x_{1:n})\pi(\theta)\\
p_{\theta}^{\epsilon}(y_{1:n},u_{1:n},x_{1:n}) & = & \Big(\prod_{i=1}^n 
\frac{\mathbb{I}_{\{u_i\in\mathsf{Y}:\mathsf{d}(u_i,y_i)<\epsilon\}}(u_i)p_{\theta}(u_i|y_{1:i-1},x_i)}{\int_{\{u\in\mathsf{Y}:\mathsf{d}(u,y_i)<\epsilon\}}du}
\Big)
p_{\theta}(x_{1:n}).
\end{eqnarray*}
The probabilistic structure in \eqref{eq:model} is retained in \eqref{eq:abc}. We note that the issue in the second bullet point for problems of standard ABC approximations is dealt with to an extent;
it breaks the global dependence (on the data i.e.~the factor $\mathbb{I}_{\{u_{1:n}\in\mathsf{Y}^n:\mathsf{d}(s(u_{1:n}),s(y_{1:n}))<\epsilon\}}(u_{1:n})$) into a collection of smaller and possibly easier to handle (from a computational perspective) sub-problems
(i.e.~the factors $\mathbb{I}_{\{u_i\in\mathsf{Y}:\mathsf{d}(u_i,y_i)<\epsilon\}}(u_i)$).
This type of approximation has been studied in various special cases in \cite{chopin1,calvet,dean,dean1,delmoraljacod,ehrlich,
jasra1,jasra2,jasra,martin,mckcoodea2009,theo}, the actual intepretation will allow one, as we will explain later on in this article, to perform asymptotically in $n$ (for some classes of models and under some assumptions and modifications) consistent parameter estimation when taking a likelihood based (i.e.~trying to optimize $\log[p_{\theta}^{\epsilon}(y_{1:n},u_{1:n},x_{1:n})]$) or Bayesian approach to this task. The approximation does not use any summary statistics; this is only an option if $d_y$ is small to moderate - this latter scenario is sufficiently rich in that 
one does not always attempt to fit the original model (i.e.~approximation is not needed) if this were not the case.

This article will consider the ABC approximation \eqref{eq:abc} for a variety of special cases (that is, the models 1.-3.~mentioned above). We will review and discuss both the inferential
aspects associated to this approximation (such as consistency in parameter estimation, bias and so fourth) and the computational methods that can be used to fit the models. Several
original remarks are made along the way. It is remarked that a comprehensive comparison of (some) computational methods for ABC approximations posteriors based upon \eqref{eq:abc}
and those based upon \eqref{eq:abc_post} can be found in \cite{chopin1}.

\subsection{Structure of the Article}

The article is structured as follows. 
In Section \ref{sec:rem_abc} we investigate our ABC approximation in more details, discuss the notion of ABC and some alternative approximation schemes.
In Section \ref{sec:models} we consider three classes of models which include i.i.d.~models, ODTS models and HMMs. We
discuss the ABC approximation and adjustments which can allow one to perform asymptotically consistent parameter estimation. In Section \ref{sec:comp} we give a variety of computational
methods which can be used to fit the particular ABC approximations discussed in Section \ref{sec:models}. We mainly consider SMC and MCMC methods, of which we assume the reader
has some familiarity. Other approaches such as expectation-propagation are also considered. In Section \ref{sec:summary} the article is concluded and
future research work is discussed. The appendix houses some proofs and assumptions of some propositions which are given in the article.

\section{Some Remarks on ABC}\label{sec:rem_abc}

\subsection{The Approximation}\label{sec:approx}

We begin our review by making some remarks on the approximation in \eqref{eq:abc}. 
We first show, using arguments from \cite{biau} that the approximation bias will disappear as $\epsilon\rightarrow 0$.
Throughout $\mathsf{d}$ is the Euclidean distance; it does not appear that in practice the selection of the distance criterion makes a significant difference.
We have the following result for any bounded, measurable real-valued function $\xi:\Theta\times\mathsf{X}^n\rightarrow\mathbb{R}$. The assumption (A\ref{hyp:1}) along with the proof is in Appendix \ref{sec:prfs}.

\begin{prop}\label{prop:ep_conv}
Assume (A\ref{hyp:1}). Then for almost every $y_{1:n}\in\mathsf{Y}^n$
$$
\lim_{\epsilon\rightarrow 0}\int_{\Theta\times\mathsf{X}^n}\xi(\theta,x_{1:n}) \pi^{\epsilon}(\theta,x_{1:n}|y_{1:n}) dx_{1:n}d\theta = \int_{\Theta\times\mathsf{X}^n}\xi(\theta,x_{1:n}) \pi(\theta,x_{1:n}|y_{1:n})  dx_{1:n} d\theta.
$$
\end{prop}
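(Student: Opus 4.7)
My plan is to argue by dominated convergence, using the Lebesgue differentiation theorem as the engine for pointwise convergence of the conditional densities.

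First, I would fix $y_{1:n}$ in the full-measure set where the Lebesgue differentiation theorem applies to each of the locally integrable densities $u \mapsto p_\theta(u \mid y_{1:i-1}, x_i)$ (for each fixed $\theta, x_i$, and more carefully for almost every $(\theta, x_{1:n})$ jointly). By that theorem,
\[
p_\theta^\epsilon(y_i \mid y_{1:i-1}, x_i) \;=\; \frac{1}{\mathrm{vol}(B_\epsilon(y_i))}\int_{B_\epsilon(y_i)} p_\theta(u_i \mid y_{1:i-1}, x_i)\, du_i \;\xrightarrow[\epsilon\to 0]{}\; p_\theta(y_i \mid y_{1:i-1}, x_i).
\]
Multiplying these $n$ convergences and then by $p_\theta(x_{1:n})\pi(\theta)$, I obtain pointwise (in $(\theta, x_{1:n})$, a.e.) convergence $p_\theta^\epsilon(y_{1:n}, x_{1:n})\pi(\theta) \to p_\theta(y_{1:n}, x_{1:n})\pi(\theta)$, and likewise for the integrand weighted by the bounded $\xi$.

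Next I would invoke (A\ref{hyp:1}) to produce an $\epsilon$-uniform integrable majorant. The natural candidate is a bound of the form $p_\theta^\epsilon(y_i \mid y_{1:i-1}, x_i) \le C_i(\theta, x_i, y_{1:i-1}, y_i)$ with $\prod_i C_i \cdot p_\theta(x_{1:n}) \pi(\theta)$ integrable in $(\theta, x_{1:n})$; this is standard if (A\ref{hyp:1}) asserts, for example, that $p_\theta(u \mid y_{1:i-1}, x_i)$ is bounded in $u$ on a neighborhood of $y_i$ by something integrable in the remaining variables, or more strongly that the relevant Hardy--Littlewood maximal functions are integrable. With such a majorant, the dominated convergence theorem gives
\[
\int \xi(\theta, x_{1:n})\, p_\theta^\epsilon(y_{1:n}, x_{1:n})\,\pi(\theta)\, dx_{1:n}\, d\theta \;\longrightarrow\; \int \xi(\theta, x_{1:n})\, p_\theta(y_{1:n}, x_{1:n})\,\pi(\theta)\, dx_{1:n}\, d\theta
\]
and the same for the normalizing constant (take $\xi \equiv 1$).

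Finally, since the limiting normalizing constant $\int p_\theta(y_{1:n}, x_{1:n})\pi(\theta)\, dx_{1:n}\,d\theta$ is strictly positive for $y_{1:n}$ in a full-measure set (as $p(y_{1:n})$ is the marginal density, positive a.e.~on its support), taking the ratio of the two limits yields the claimed convergence of the expectation under $\pi^\epsilon(\theta, x_{1:n} \mid y_{1:n})$ to the expectation under $\pi(\theta, x_{1:n} \mid y_{1:n})$.

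The main obstacle is the second step: verifying that (A\ref{hyp:1}) really yields an $\epsilon$-uniform dominating function in $(\theta, x_{1:n})$. Pointwise convergence from Lebesgue differentiation is essentially automatic, and the ratio step is routine once both integrals converge, but pushing the limit through the integrals requires the majorant to be insensitive to $\epsilon$ and integrable against $p_\theta(x_{1:n})\pi(\theta)$ on the full product space $\Theta \times \mathsf{X}^n$. I expect (A\ref{hyp:1}) to encode precisely such a local boundedness/integrability hypothesis, and the proof then reduces to verifying that the stated bound indeed dominates the ball averages uniformly in $\epsilon$.
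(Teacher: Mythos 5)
Your proposal follows essentially the same route as the paper's own proof: pointwise convergence of each $p_{\theta}^{\epsilon}(y_i|y_{1:i-1},x_i)$ via the Lebesgue differentiation theorem, dominated convergence to pass the limit through the integral under (A\ref{hyp:1}), and then handling the numerator and normalizing constant of Bayes' theorem separately (the paper organizes the product convergence as an induction on $n$, but that is cosmetic). Your remark that (A\ref{hyp:1}) by itself does not obviously furnish an $\epsilon$-uniform integrable majorant is a fair observation --- the paper's proof simply cites (A\ref{hyp:1}) and does not construct one either --- but this does not make your approach different from the paper's.
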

The result is fairly standard in the literature and is included for completeness of the article as well as for pedagogical purposes. Essentially it tells us, at the level of approximation,
but before any fitting is considered, that one would like to make $\epsilon$ as small as possible.

The extended ABC posterior, to be defined below, is needed for computational algorithms, as one does not know $\pi^{\epsilon}(\theta,x_{1:n}|y_{1:n})$ point-wise up-to a normalizing constant;
this is the main requirement of most Monte Carlo based simulation methods, which are often needed to draw inference from the ABC posterior. An important remark concerning the extended ABC posterior
$$
\pi^{\epsilon}(\theta,x_{1:n},u_{1:n}|y_{1:n}) \propto p_{\theta}^{\epsilon}(y_{1:n},u_{1:n},x_{1:n})\pi(\theta)
$$
where $p_{\theta}^{\epsilon}(y_{1:n},u_{1:n},x_{1:n})$ is as equation \eqref{eq:abc} 
is not the only choice which yields $\pi^{\epsilon}(\theta,x_{1:n}|y_{1:n})$ as a marginal where
$$
\pi^{\epsilon}(\theta,x_{1:n}|y_{1:n}) = \int_{\mathsf{Y^n}} \pi^{\epsilon}(\theta,x_{1:n},u_{1:n}|y_{1:n}) du_{1:n}.
$$ 
One possible alternative is the density 
$$
p_{\theta}^{\epsilon}(y_{1:n},\phi_{1:n},x_{1:n}) = \Big(\prod_{i=1}^n 
\frac{\mathbb{I}_{\{\phi_i\in\Phi:\mathsf{d}(\varphi_{i,\theta}(y_{1:i-1},x_i,\phi_i),y_i)<\epsilon\}}(\phi_i)p_{\theta}(\phi_i)}{\int_{\{\phi\in\Phi:\mathsf{d}(\varphi_{i,\theta}(y_{1:i-1},x_i,\phi),y_i)<\epsilon\}}d\phi}
\Big)
p_{\theta}(x_{1:n})
$$
which we term the \emph{collapsed} representation of the model, paralleling this idea for HMMs (see \cite{murray}). The subsequent ABC posterior is
$
\pi^{\epsilon}(\theta,x_{1:n},\phi_{1:n}|y_{1:n}) \propto p_{\theta}^{\epsilon}(y_{1:n},\phi_{1:n},x_{1:n}) \pi(\theta).
$
This type of decomposition, as discussed in both \cite{andrieu1,yildrlim} in the context of ABC is particularly interesting,
indeed, if $p_{\theta}(\phi)$ is known point-wise up-to a constant, then (as remarked by \cite{andrieu1}) \emph{one no longer requires the ability to sample the data}, that is, to sample the $\{\phi_n\}_{n\geq 1}$. This is because, assuming one can evaluate the prior $\pi(\theta)$,
the ABC posterior can be evaluated point-wise and up-to a normalizing constant.
It is remarked that the collapsed representation might be bourne out of \emph{inferential}  considerations, such as inability to sample the data.
The collapsed representation will be particularly useful for HMMs as we will discuss in Section \ref{sec:hmms}. 

In Section \ref{sec:comp} we will see different extended ABC posteriors that can yield computational algorithms which are more efficient (in some sense)
than if one considered $\pi^{\epsilon}(\theta,x_{1:n},u_{1:n}|y_{1:n})$. It should be noted that this is a \emph{computational} consideration and not an inferential one; that is computational
improvements can be yielded and not necessarily inferential ones. 
At this stage, we do not dwell on these points and give extended ABC posteriors $\pi^{\epsilon}(\theta,x_{1:n},u_{1:n}|y_{1:n})$ (or the collapsed version)
in our subsequent discussions (up-to Section \ref{sec:comp}). The reader, should, however, keep these ideas in mind.

\subsection{Noisy ABC}\label{sec:noisy_abc}

An interesting idea, developed in \cite{fearnhead,wil2008} is the notion of \emph{noisy} ABC, a term apparently coined in \cite{fearnhead}, see also \cite{dean}. The idea, which follows \cite[Theorem 1]{wil2008}, is this; the model which is 
to be fitted is:
$$
\pi(\theta,x_{1:n}|y_{1:n}) \propto \Big(\prod_{i=1}^n 
p_{\theta}^{\epsilon}(y_i|y_{1:i-1},x_i)
\Big)
p_{\theta}(x_{1:n}) \pi(\theta).
$$
The idea is then if the data \emph{are from the original model} i.e.~that in \eqref{eq:model} (although this is not a pre-requisite of using the approach), then instead of using the original observations $y_{1:n}$, one uses perturbed observations defined as, for $i\geq 1$
$$
Z_i | y_{1:i} \sim \mathcal{U}_{\{z\in\mathsf{Y}:\mathsf{d}(z,y_i)<\epsilon\}}
$$
where $\mathcal{U}_A$ is the uniform distribution on a set $A$. Then, one fits the model
$$
\pi(\theta,x_{1:n}|z_{1:n}) \propto \Big(\prod_{i=1}^n 
p_{\theta}^{\epsilon}(z_i|z_{1:i-1},x_i)
\Big)
p_{\theta}(x_{1:n}) \pi(\theta).
$$
using the techniques that were to be adopted when using the original data. The intuition is that the perturbed observations are now from the model that is actually being fitted and one hopes for favourable statistical properties of estimates of $\theta$
and the associated posterior (as well as the $x_{1:n}$ if they are of interest).

From a mathematical perspective \cite[Theorem 2]{fearnhead} show that noisy ABC is calibrated, that is, roughly, that the noisy ABC posterior has good properties when the number of data are large (although as pointed out by \cite{robert}, there appear to be some missing
technicalities in their result). Similar results are derived in \cite{dean1,dean} for HMMs except when concerning both Bayesian and classical estimation, that is computing 
\begin{equation}
\widetilde{\theta}_{n}^{\epsilon} = \textrm{argmax}_{\theta\in\Theta}\Bigg(\frac{1}{n}\log\Big(\int_{\mathsf{X}^n} \Big(\prod_{i=1}^n 
p_{\theta}^{\epsilon}(z_i|z_{1:i-1},x_i)
\Big)
p_{\theta}(x_{1:n}) dx_{1:n}\Big)\Bigg) \label{eq:noisy_mle}.
\end{equation}
In particular consistency and asymptotic normality properties are investigated. The main point is that (for HMMs and under assumptions) the noisy ABC MLE in \eqref{eq:noisy_mle} is consistent (that is, returns the true parameter that generated the data, $\theta^{\star}$ say) as $n\rightarrow\infty$, whatever $\epsilon>0$ is chosen. We will investigate this idea in more details in the subsequent
sections of the paper.

\subsection{Alternatives to ABC}

At its core, ABC is simply a way to approximate the posterior density and one may question why to choose ABC versus many other alternatives, such as indirect inference \cite{gour}.
The purpose of this article is not to go into comparisons between various approximation schemes (that has been done in many articles such as \cite{fearnhead,jasra}), but we make
some mention of alternatives here. Perhaps the main competitor to ABC is indirect inference; \cite[Theorem 5]{fearnhead} indicates that in some scenarios ABC can be more accurate as
$\epsilon\rightarrow 0$. 

In the context of HMMs there are a variety of alternatives such as nonparametric particle filtering \cite{camros2009,gauchi}; a comparison with ABC is discussed
in \cite{jasra}. For HMMs there is also a link between ABC and the ensemble Kalman filter (e.g.~\cite{evensen}) see \cite{nott}, which is an approximation of the filtering density of an HMM (this is discussed
later on): a direct comparison of these ideas does not appear to be in the literature. An approach proposed by for cases where the observation model is `unknown' is developed in \cite{kanagawa}, but these scenario are 
even more complex than encountered in this article.

 Some ideas which are based upon ABC, but different to those considered in this article can be found in \cite{adam,theo}. 
In \cite{adam} the authors form a kernel density estimate of the
ABC likelihood from $\theta$ samples drawn from the ABC posterior distribution of $\theta$. They then
maximise this kernel density estimate as an approximation to MLE.
In \cite{theo}, they use an ABC approximation based upon \eqref{eq:abc} (except with a non-negative kernel),
and use a similar idea to \cite{adam} in building a kernel density estimate except for the posterior on $\theta$.

One technique found in the data assimilation literature, termed randomized maximum likelihood (e.g.~\cite{oliver}), involves repeated simulations from the prior and likelihood, to obtain estimators of parameters and is, to an extent,
related to ABC (although is not necessarily adopted when the likelihood is intractable). To our knowledge, this approach has not been systematically understood, nor compared to ABC.

\section{Models}\label{sec:models}

\subsection{I.I.D.~Models}

\subsubsection{Approximation}\label{sec:iid_approx}

In the context that there is no latent process $\{X_n\}$ and $\varphi_{n,\theta}$ does not depend upon $Y_{1:n-1}$ one has the following version of the ABC approximation in 
\eqref{eq:abc}
$
p_{\theta}^{\epsilon}(y_{1:n}) = \Big(\prod_{i=1}^n 
p_{\theta}^{\epsilon}(y_i)
\Big)
$
where
\begin{equation}
p_{\theta}^{\epsilon}(y_i)
= \frac{\int_{\mathsf{Y}} \mathbb{I}_{\{u_i\in\mathsf{Y}:\mathsf{d}(u_i,y_i)<\epsilon\}}(u_i)p_{\theta}(u_i)du_i}{\int_{\{u\in\mathsf{Y}:\mathsf{d}(u,y_i)<\epsilon\}}du}.\label{eq:iid_abc_like}
\end{equation}
yielding the extended ABC posterior:
\begin{equation}
\pi^{\epsilon}(\theta,u_{1:n}|y_{1:n}) \propto \Big(\prod_{i=1}^n\mathbb{I}_{\{u_i\in\mathsf{Y}:\mathsf{d}(u_i,y_i)<\epsilon\}}(u_i)p_{\theta}(u_i)\Big)\pi(\theta)\label{eq:abc_iid_ex}.
\end{equation}
One can also use collapsed versions as in Section \ref{sec:approx} and replace the $y_{1:n}$ with $z_{1:n}$ as in Section \ref{sec:noisy_abc}.
Approximations of these type  have been used in \cite{chopin1,yildrlim} and include $\alpha-$stable models \cite{peters} where the observations possess an $\alpha-$stable
distribution and the $g-and-k$  distribution. In the former model \cite{peters} show that it is difficult in the ABC approximation \eqref{eq:abc_post} to select summary statistics;
the utility of the ABC approximation based upon \eqref{eq:abc} is thus evident, where this problem as been removed.
In general this ABC approximation is expected to be of interest if one can sample the $\{\phi_n\}_{n\geq 1}$ but cannot evaluate the associated
density, or if one cannot sample the $\{\phi_n\}_{n\geq 1}$ but can evaluate the associated density (for which the collapsed representation could be used). We now further discuss why this approximation might be useful in practice.

\subsubsection{Properties of the Approximation}

One reason why the ABC approximation based around \eqref{eq:abc}  for this and other models is quite reasonable is as follows. Suppose that one is only interested in likelihood-based inference on
$\theta$ (we discuss Bayesian inference below), then one would seek to find the MLE:
$$
\widehat{\theta}_n^{\epsilon} = \textrm{argmax}_{\theta}\Big(\frac{1}{n}\sum_{i=1}^n \log(p_{\theta}^{\epsilon}(y_i))\Big).
$$
As mentioned above, one could also consider the noisy ABC MLE:
$$
\widetilde{\theta}_n^{\epsilon} = \textrm{argmax}_{\theta\in\Theta}\Big(\frac{1}{n}\sum_{i=1}^n \log(p_{\theta}^{\epsilon}(z_i))\Big).
$$
Now suppose that the data $Y_i$ are distributed according to the true model i.e.~$Y_i\stackrel{\textrm{i.i.d.}}{\sim}P_{\theta^{\star}}(\cdot)$, $\theta^{\star}\in\Theta$.
Also define
$$
\theta^{\star,\epsilon} = \textrm{argmax}_{\theta\in\Theta}\int_{\mathsf{Y}}\log(p_{\theta}^{\epsilon}(y)) p_{\theta^*}(y) dy.
$$
Write $\stackrel{a.s.}{\rightarrow}$ to denote almost sure convergence as $n\rightarrow\infty$.
Then we have the following result for i.i.d.~models that mirrors results in \cite{dean1,dean,jasra1} and whose proof and assumptions can be found in Appendix \ref{sec:prfs_abc}.

\begin{prop}\label{prop:abc_ok}
Assume (A\ref{hyp:2}). Then for any $\epsilon>0$
\begin{itemize}
\item{$\widehat{\theta}_n^{\epsilon}\stackrel{a.s.}{\rightarrow}\theta^{\star,\epsilon}$}
\item{$\widetilde{\theta}_n^{\epsilon}\stackrel{a.s.}{\rightarrow}\theta^{\star}$.}
\end{itemize}
\end{prop}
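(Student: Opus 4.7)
The plan is to treat both statements as standard $M$-estimator consistency results, leaning on the strong law of large numbers together with a uniform-in-$\theta$ argument and identifiability of the argmax of the limiting criterion. Both parts share the same skeleton, and Assumption~(A2) almost certainly provides the customary ingredients (compactness of $\Theta$ or a suitable tightness/coercivity condition, measurability and continuity of $\theta\mapsto \log p_\theta^\epsilon(y)$, an integrable envelope so that dominated convergence applies, and uniqueness of the maximizer of the limit criterion).

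For the first bullet, I would start by noting that since $Y_i \stackrel{\text{i.i.d.}}{\sim} P_{\theta^\star}$, the SLLN gives, for each fixed $\theta$,
\[
\frac{1}{n}\sum_{i=1}^n \log p_\theta^\epsilon(Y_i) \;\xrightarrow{a.s.}\; L^\epsilon(\theta) := \int_{\mathsf Y}\log p_\theta^\epsilon(y)\,p_{\theta^\star}(y)\,dy .
\]
Upgrading this pointwise convergence to a uniform-over-$\Theta$ statement is the standard move (via an envelope from (A2) plus a finite cover / stochastic equicontinuity argument, or invoking Wald's classical consistency theorem directly). Once uniform convergence is available, and given that $\theta^{\star,\epsilon}$ is the unique maximizer of $L^\epsilon$ (again part of (A2)), the usual argmax continuous mapping argument yields $\widehat\theta_n^\epsilon \xrightarrow{a.s.} \theta^{\star,\epsilon}$.

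For the second bullet, the crucial observation is that, because $\mathsf d$ is the Euclidean distance, the volume $V_\epsilon := \int \mathbb{I}_{\{d(u,y)<\epsilon\}}du$ is independent of $y$, and the indicator is symmetric in its two arguments. Therefore, if $Y\sim p_{\theta^\star}$ and $Z\mid Y \sim \mathcal U_{B(Y,\epsilon)}$, then by Fubini the marginal density of $Z$ is
\[
q(z) \;=\; \int_{\mathsf Y}\frac{\mathbb{I}_{\{d(z,y)<\epsilon\}}}{V_\epsilon}\,p_{\theta^\star}(y)\,dy \;=\; p_{\theta^\star}^\epsilon(z).
\]
Hence the noisy observations $Z_1,\dots,Z_n$ are i.i.d.\ from $p_{\theta^\star}^\epsilon$, which is itself an element of the parametric family $\{p_\theta^\epsilon:\theta\in\Theta\}$ being fitted. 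Repeating the argument of the first bullet with data law $p_{\theta^\star}^\epsilon$ in place of $p_{\theta^\star}$ gives $\widetilde\theta_n^\epsilon \xrightarrow{a.s.} \arg\max_\theta\int\log p_\theta^\epsilon(z)\,p_{\theta^\star}^\epsilon(z)\,dz$, and the latter argmax equals $\theta^\star$ by Gibbs' inequality (non-negativity of the Kullback--Leibler divergence), under the identifiability of $\theta\mapsto p_\theta^\epsilon$ that (A2) is expected to impose.

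The main obstacle I anticipate is not the algebra above but the verification of identifiability of the smoothed family $\{p_\theta^\epsilon\}$: identifiability of the original model $\{p_\theta\}$ does not automatically survive convolution with a uniform kernel of radius $\epsilon$, and it is precisely here that (A2) must do the heavy lifting. The uniform-convergence step can also be delicate if $\Theta$ is not compact, in which case (A2) presumably supplies either compactness, a coercivity condition ensuring the maximum is attained on a compact subset with probability one for large $n$, or a uniform law of large numbers of Wald type; once that is in hand, the rest of the argument is routine.
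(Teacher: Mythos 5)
Your proposal is correct and takes essentially the same route as the paper: the noisy case is reduced to classical well-specified MLE consistency (the paper cites Ferguson's four conditions) after observing that the $Z_i$ are i.i.d.\ from $p_{\theta^{\star}}^{\epsilon}$, and the standard case is handled as a misspecified MLE converging to $\theta^{\star,\epsilon}$ (the paper invokes White's Theorem 2.2), with the integrable envelope obtained from the uniform bound in (A\ref{hyp:2}-4) via Jensen's inequality. Your concern about identifiability of the smoothed family $\{p_{\theta}^{\epsilon}\}$ is well placed: the paper's proof simply asserts that it follows from (A\ref{hyp:2}-3), which does not hold automatically after convolution with the uniform kernel, so in effect it is an additional (illustrative) assumption rather than a derived fact.
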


The result shows that the MLE associated to standard ABC will converge to a point $\theta^{\star,\epsilon}$ which \emph{may} be different than the true parameter; to show that it is always different requires some work, see for instance \cite[Theorem 1]{dean} for HMMs.
Converesely the noisy ABC MLE will return the \emph{true} parameter as the number of data grow. There are a number of issues to be discussed about results of this type. Firstly, the result is assuming that the data are being generated from the model under study, which some
investigators may not be prepared to assume; in such scenarios one would then prefer to use ABC against noisy ABC as one relies on the true data instead of perturbing it (see also \cite{fearnhead} for some discussion). Secondly, the result shows that whilst the noisy ABC estimate will, for a large number of data, give the 
true parameter, it does not characterize whether there is any loss in efficiency in using noisy ABC versus the true MLE (which one cannot obtain as the likelihood is intractable). This can be achieved by considering asymptotic normality and the Fisher information matrix; we direct the reader
towards \cite{dean} where in the context of HMMs it is shown that there is a loss of efficiency and its dependence upon $\epsilon$ is characterized. 
Thirdly, the result does not take into account that one can seldom compute the estimators analytically; so the importance of these points may be reduced. Finally, the result does not say anything about the ABC posterior. The result is easily extended to
the MAP estimator, if the prior density on $\theta$ is bounded away from zero (which could happen, as we have assumed $\Theta$ is compact). More generally one may want to consider Bayesian consistency and Bernstein-Von Mises theorems; we direct the reader to \cite{dean1}
for more discussion on this issue. 

The main point to take on board from Proposition \ref{prop:abc_ok} is that the ABC approximation is quite reasonable, in that very sensible parameter estimates can be yielded. Inspection of the proofs help to support some of the points made in the introduction;
the removal of summary statistics combined with retaining the probabilistic structure of the likelihood has allowed one to discover some of the underlying properties of the ABC approximation. One could extend Proposition \ref{prop:abc_ok} to include summary
statistics as done in \cite{dean}.

To illustrate the point in Proposition \ref{prop:abc_ok}, we provide a numerical example that considers the model: $Y_n=\theta+\phi_n$, where $\phi_n\stackrel{\textrm{i.i.d.}}{\sim}\mathcal{N}(0,1)$ where $\mathcal{N}(0,1)$
is the standard normal distribution in one-dimension. We assume that \emph{a priori} $\theta\sim\mathcal{N}(0,1)$. We generated two data-sets of size 100 and 1000 from the model (when $\theta=0$) and whilst ABC is not needed here (the posterior is known analytically)
we fit a standard and noisy ABC approximation to the data. We run Algorithm \ref{alg:new} (in Section \ref{sec:comp}) and plot a density estimate of the samples in Figure \ref{fig:fig1}. The plot shows, reasonably clearly that the MAP estimator
for noisy ABC is closer to that of the true parameter (which is zero) as $n$ grows, than that of the standard ABC MAP.

\begin{figure}[h]
\centering \subfigure[$n=100,\epsilon=10$]{{\includegraphics[width=0.49\textwidth,height=7.5cm]{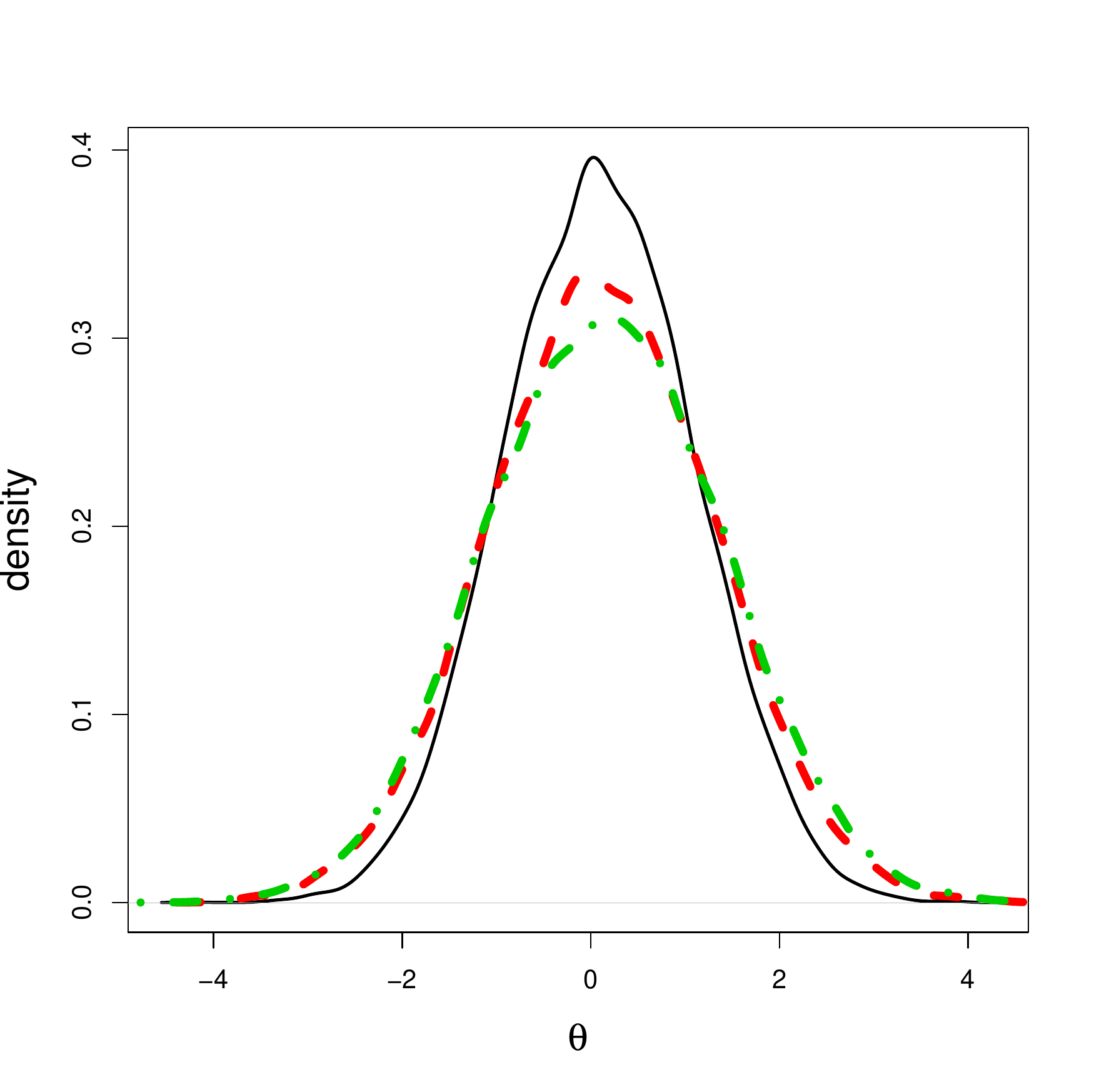}}}
\subfigure[$n=1000,\epsilon=10$]{{\includegraphics[width=0.49\textwidth,height=7.5cm]{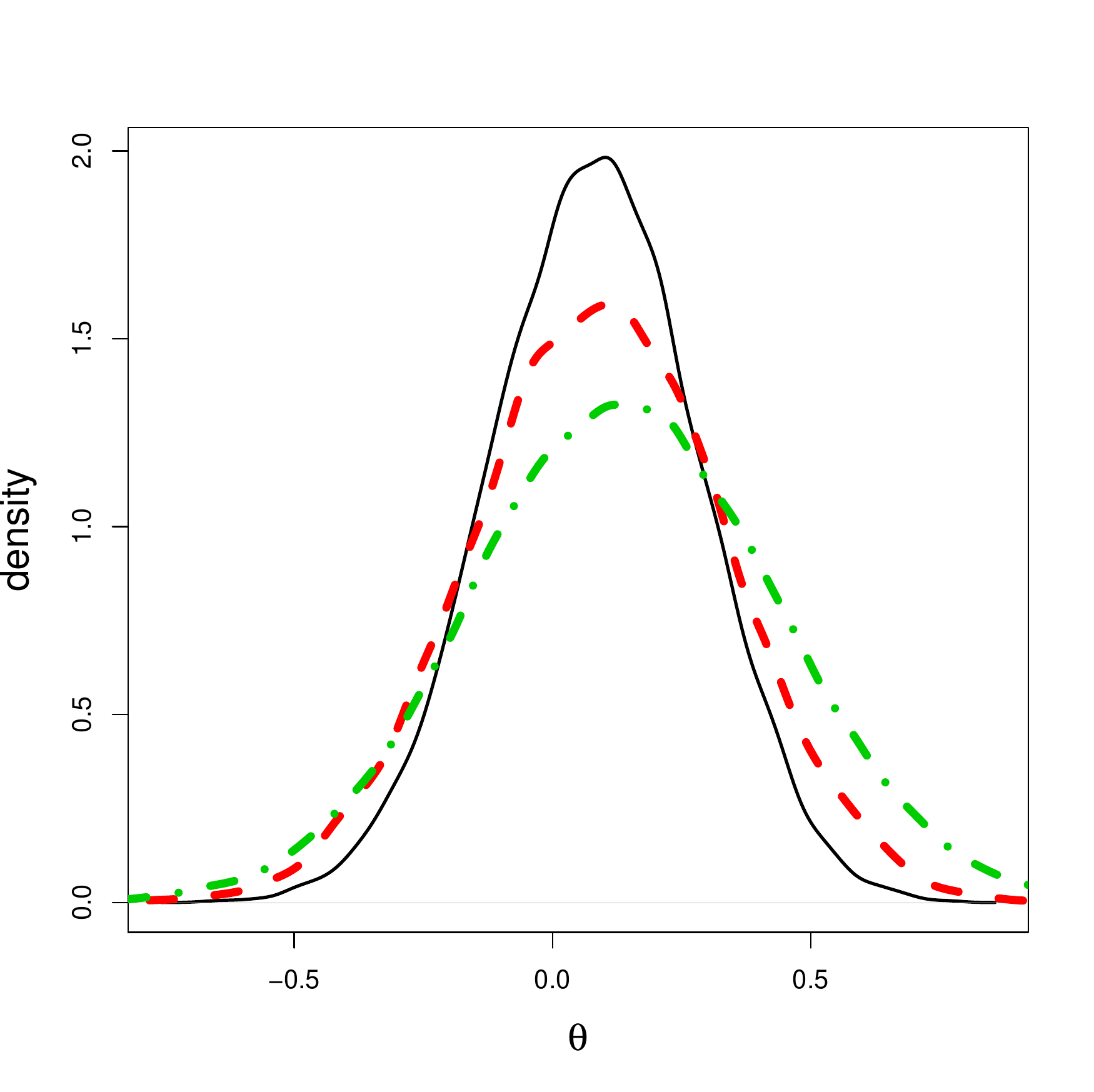}}}
\caption{Marginal MCMC Density Plots for Normal Means Example. The thin density line is the true posterior, the noisy ABC posterior is the red dash line
and the standard ABC posterior is the dash-dot green line.}
\label{fig:fig1}
\end{figure}

\subsection{Observation-Driven Time Series}

\subsubsection{Model and Approximation}

The model can be described as follows. We observe $\{Y_n\}_{n\in\mathbb{N}}$, which are associated to an unobserved process $\{X_n\}_{n\in \mathbb{N}}$ which is potentially unknown.
Define the process $\{Y_k,X_k\}_{k\in\mathbb{N}_0}$ (with $y_0$ some arbitrary point on $\mathsf{Y}$) and $X_0$ to be introduced below. Denote by $\mathscr{F}_k$ as the information in $\{Y_n,X_n\}$ up-to time $k$.
The model is defined as, for $n\in\mathbb{N}_0$ (using $\mathbb{P}$ to denote probability) 
\begin{eqnarray*}
\mathbb{P}_{\gamma}(Y_{n+1}\in A|\mathscr{F}_n) & = & \int_A p_{\gamma}(y_{n+1}|x_n) dy_{n+1} \quad A\subseteq\mathsf{Y}\\
X_{n+1} & = & \Psi^{\gamma}(X_n,Y_{n+1}) \\
\mathbb{P}_{\gamma}(X_0\in B) & = & \int_{B} \nu_{\gamma}(x_0)dx_0  \quad B\subseteq\mathsf{Y}
\end{eqnarray*}
where $\gamma\in\Gamma\subseteq\mathbb{R}^{d_{\gamma}}$,
$\Psi:\Gamma\times\mathsf{X}\times\mathsf{Y}\rightarrow\mathsf{X}$ and $\nu_{\gamma}(x_0)$ is a probability
density on $\mathsf{X}$ for every $\gamma\in\Gamma$. 
Next, we define a prior probability density $\vartheta(\gamma)$ and write 
$\pi(x_0,\gamma)=\nu_{\gamma}(x_0)\vartheta(\gamma)$ assumed to be a proper
joint probability density on $\mathsf{X}\times\Gamma$. Thus, given $n$ observations the object of inference is the posterior distribution on $\Gamma\times \mathsf{X}$:
\begin{equation}
\pi(\gamma,x_0|y_{1:n}) \propto \Bigg(\prod_{i=1}^n p_{\gamma}(y_i|\Psi^{\gamma}_{i-1}(y_{0:i-1},x_0))\Bigg)\pi(x_0,\gamma) 
\label{eq:true_post}
\end{equation}
where we have used the notation for $i>1$, $\Psi_{i-1}^{\gamma}(y_{0:i-1},x_0) = \Psi^\gamma \circ\cdots\circ \Psi^{\gamma}(x_0,y_1)$,
$\Psi_1^{\gamma}(y_0,x_0):=\Psi^{\gamma}(x_0,y_0)$,
$\Psi_{0}^{\gamma}(x_0,y_{0}):=x_0$. Note that $p_{\gamma}(y_{i}|x_{i-1})$ is a Lebesgue density assumed to be induced by the relationship $Y_i=\varphi_{\gamma}(\Psi^{\gamma}_{i-1}(y_{0:i-1},x_0),\phi_n)$
for $\{\phi_n\}_{n\geq 1}$ and i.i.d.~sequence, and whilst $\{X_n\}_{n\geq 0}$ is a latent process one can consider $\theta=(\gamma,x_0)$, $\Theta=\Gamma\times\mathsf{X}$, so that the model falls into the framework of this article.
  In most applications of practical interest, the posterior cannot be computed point-wise and one has to resort
to numerical methods, such as MCMC, to draw inference on $\theta$ and/or $x_0$. The models which lie under this class are some GARCH models and some Bayesian inverse problems
(see e.g.~\cite{kantas} and the references therein).

The extended ABC approximation for this model is
$$
\pi^{\epsilon}(x_{0},\gamma,u_{1:n}|y_{1:n})\propto \Big(\prod_{i=1}^{n}\mathbb{I}_{\{u:\mathsf{Y}:\mathsf{d}(u,y_i)<\epsilon\}}(u_{i})p_{\gamma}(u_i|\Psi_{i-1}^{\gamma}(y_{0:i-1},x_{0}))\Big)\pi(x_{0},\gamma)
$$
which is not dissimilar to the i.i.d.~case. Indeed, one can use similar computational methods to the i.i.d.~case to fit the model. It is also remarked that one can introduce a collapsed representation as well as noisy ABC for this class of models.

\subsubsection{Properties of the Approximation}

The ABC approximation for ODTS models has a similar structure to the i.i.d.~model. However, to prove results of the type in Proposition \ref{prop:abc_ok} one has to appeal to the consistency theory for such models which can be found in 
\cite{douc}. This is done in \cite{jasra1}, but under very strong assumptions which imply that $\mathsf{Y},\Theta$ are compact state-spaces. Indeed, there are numerous results which could be proved (but do not appear to be) for ABC approximations of such classes of models, including
Bayesian consistency and Bernstein von-Mises theorems.

\subsection{Hidden Markov Models}\label{sec:hmms}

\subsubsection{Model and Approximation}

Hidden Markov Models are widely used in statistics; see \cite{cappe} for a recent overview. An HMM\ is a pair of discrete-time stochastic
processes, $\left\{  X_{n}\right\}  _{n\geq1}$ and $\left\{
Y_{n}\right\} _{n\geq 1}$. The hidden process
$\left\{  X_{n}\right\}_{n\geq 1} $ is a\ Markov chain with $x_0$ a given point and transition density $p_{\theta}(x_{n}|x_{n-1})$, i.e
\begin{equation}
\mathbb{P}(X_{n}\in A|X_{n-1}=x_{n-1})=\int_{A}p_{\theta}(x_{n}|x_{n-1})dx_{n}\quad n\geq1 \label{eq:evol}%
\end{equation}
where $A\subseteq\mathsf{X}$.
The observation
$Y_{n}$ is assumed to be conditionally independent of other variables given $\left\{  X_{n}\right\}_{n\geq 0}$ and its conditional distribution is specified by a density
$p_{\theta}(y_n|x_{n})$
\begin{equation}
\mathbb{P}(Y_{n}\in B|\{X_{k}\}_{k\geq 0}=\{x_{k}\}_{k\geq 0})=\int_{B}p_{\theta}(y_n|x_{n})dy_{n}\quad n\geq 1  \label{eq:obs}%
\end{equation}
with $B\subseteq\mathsf{Y}$. When the parameters of the model are fixed, particular interest is in the filtering density
$$
\pi_{\theta}(x_n|y_{1:n}) \propto \int_{\mathsf{X}^{n-1}} \prod_{i=1}^{n} p_{\theta}(y_i|x_i)p_{\theta}(x_i|x_{i-1}) dx_{1:n-1}
$$
and joint smoothing density
$$
\pi_{\theta}(x_{1:n}|y_{1:n}) \propto  \prod_{i=1}^{n} p_{\theta}(y_i|x_i)p_{\theta}(x_i|x_{i-1}).
$$
We note that in general, expectations w.r.t.~the associated distributions are not known analytically (with the exception of a few special cases), even if
$ p_{\theta}(y_i|x_i)p_{\theta}(x_i|x_{i-1})$ can be evaluated. Typically SMC methods are used to approximate these latter distributions.

The extended ABC approximation, which, to our knowledge first appeared in \cite{delmoraljacod} (when $\theta$ is fixed), is
\begin{equation}
\pi^{\epsilon}(\theta,x_{1:n},u_{1:n}|y_{1:n}) \propto \Big(\prod_{i=1}^{n} \mathbb{I}_{\{u\in\mathsf{Y}:\mathsf{d}(u,y_i)<\epsilon\}}(u_i)p_{\theta}(u_i|x_i)p_{\theta}(x_i|x_{i-1})\Big) \pi(\theta)\label{eq:abc_hmm_post}
\end{equation}
it should be noted that the densities in \eqref{eq:evol} and \eqref{eq:obs} can be time-inhomogeneous, but we have adopted a time-homogeneous representation for simplicity of notation.
The representation has been used in many articles; a non-exhaustive list is \cite{calvet,dean1,dean,ehrlich,jasra2,jasra,martin,mckcoodea2009}. A collapsed representation may be adopted
and is discussed further in Section \ref{sec:collapse} and, as before one can easily adopt a noisy ABC approach. A variety of models fall into the class of HMMs, including stochastic volatility models.

\subsubsection{Properties of the Approximation}

The ABC approximation of HMMs appears to be the most extensively studied, of all the approximations considered in this article. Some theoretical results are as follows. When one considers
$\theta$ fixed and uses the original data (i.e.~standard ABC) \cite{jasra} show under strong assumptions that for a bounded measurable $\xi:\mathsf{X}\rightarrow\mathbb{R}$
$$
\Big|\int_{\mathsf{X}}\xi(x) [\pi_{\theta}(x_n|y_{1:n})-\pi_{\theta}^{\epsilon}(x_n|y_{1:n})] dx_n\Big| \leq C\epsilon\sup_{x\in\mathsf{X}}|\xi(x)|
$$
where $C$ does not depend on $n$. That is to say, that the bias of the ABC approximation of the filter does not necessarily increase with $n$ and can be controlled by making $\epsilon$
arbitrarily small. Similarly \cite{martin} show under assumptions that for $\xi_1,\dots,\xi_n$ bounded, measurable with $\xi_i:\mathsf{X}\rightarrow\mathbb{R}$
$$
\Big|\int_{\mathsf{X}^n}\Big(\sum_{i=1}^n\xi_i(x_i)\Big) [\pi_{\theta}(x_{1:n}|y_{1:n})-\pi_{\theta}^{\epsilon}(x_{1:n}|y_{1:n})] dx_{1:n}\Big| \leq Cn\epsilon\max_i\Big(\sup_{x\in\mathsf{X}}|\xi_i(x)|)\Big)
$$
where $C$ does not depend on $n$, 
which says that the bias of the ABC approximation of the expectation of additive functions w.r.t.~the smoothing distribution grows at most linearly with respect to $n$ and could be controlled by making $\epsilon$ suitably small. Results of this type are of particular interest in smoothing; see e.g.~\cite{dds1} for further details.
Additional results associated to the ABC bias of the filter derivative, that is $|\nabla_{\theta} [\int_{\mathsf{X}}\xi(x) [\pi_{\theta}(x_n|y_{1:n})-\pi_{\theta}^{\epsilon}(x_n|y_{1:n})] dx_n]|$ can be found in \cite{dean1,dean,ehrlich}; these latter results can be used in the proofs of the results to be discussed below, or of
independent interest of themselves (see e.g.~\cite{poya} why this is the case).

More interesting, perhaps are results regarding parameter estimation. In the context of Proposition \ref{prop:abc_ok}, \cite{dean} have shown (under assumptions) that the MLE associated to standard ABC (i.e.~$\widehat{\theta}_n^{\epsilon}=\textrm{argmax}_{\theta\in\Theta}\frac{1}{n}\log[p_{\theta}^{\epsilon}(y_{1:n})]$) has an intrinsic bias as $n$ grows, indeed they find the bias as a function of $\epsilon$.
This result is more precise than that provided in Proposition \ref{prop:abc_ok} which does not show that the ABC MLE is necessarily different from the true parameter and does not characterize the asymptotic bias. \cite{dean} then show that the noisy ABC MLE
 (i.e.~$\widetilde{\theta}_n^{\epsilon}=\textrm{argmax}_{\theta\in\Theta}\frac{1}{n}\log[p_{\theta}^{\epsilon}(z_{1:n})]$)  is asymptotically consistent and
then give an asymptotic normality result
$$
\sqrt{n}(\widetilde{\theta}_n^{\epsilon}-\theta^*) \Rightarrow \mathcal{N}_{d_{\theta}}(0,I(\epsilon))
$$
where $\Rightarrow$ denotes convergence in distribution as $n$ grows and $\mathcal{N}_{d_{\theta}}(0,I(\epsilon))$ denotes the $d_{\theta}-$dimensional normal distribution with zero mean vector and covariance matrix $I(\epsilon)$. 
The authors go even further, in that they compare $I(\epsilon)$ with the Fisher information matrix associated to the MLE; the loss of efficiency (the price to pay for using an ABC approximation) is given as a function of $\epsilon$.
These results are further refined in \cite{dean1} and Bayesian consistency and Bernstein von-Mises theorems are considered for the the noisy ABC posterior.

\subsubsection{The Collapsed Representation}\label{sec:collapse}

The collapsed representation (as discussed in Section \ref{sec:approx}) is particularly interesting for HMMs. A version can be found in \cite{yildrlim}, but we will present something slightly different, which allows one to deal with scenarios
where the transition density of the hidden Markov chain $p_{\theta}(x_n|x_{n-1})$ is also intractable. We will suppose that the hidden Markov chain satisfies
$$
X_n = \rho_{\theta}(x_{n-1},\eta_n) \quad n\geq 1
$$
where $\{\eta_n\}$ is an i.i.d.~sequence of random variables $\eta_n\in\Xi\subseteq\mathbb{R}^{d_{\eta}}$ and $\rho:\Theta\times\mathsf{X}\times\Xi\rightarrow\mathsf{X}$.
Not all hidden Markov chains will fall in this class, but it is sufficiently large to be of practical interest. Then, one may adopt the following ABC approximation
$$
\pi^{\epsilon}(\theta,\phi_{1:n},\eta_{1:n}|y_{1:n}) \propto \Big(\prod_{i=1}^{n} \mathbb{I}_{\{\phi\in\Phi:\mathsf{d}(\varphi_{i,\theta}(x_n,\phi),y_i)<\epsilon\}}(\phi_i)p_{\theta}(\phi_i)p_{\theta}(\eta_i)\Big) \pi(\theta)
$$
where we have that $x_n=\rho_{\theta}(\rho_{\theta}(\circ\cdots\circ\rho_{\theta}(x_0,\eta_1),\eta_{n-1}),\eta_n)$. The key point here is that one can then estimate so-called doubly intractable HMMs if either $p_{\theta}(\phi_i)p_{\theta}(\eta_i)$
are known or one can sample them both (or a combination of these). This idea can be found in \cite{zhang} and relies upon the approach in \cite{murray}.

\section{Computation}\label{sec:comp}

\subsection{Introduction}

We will now describe a collection of computational methods which can be used to fit the ABC approximations considered in the previous section. We note that the computation of ODTS models
is much the same as for i.i.d.~models and thus we only consider i.i.d.~models. The reader is reminded that some familiarity with SMC and MCMC methods is assumed; complete introductions can be found in \cite{doucet,robert1}. We also remark that at this stage, at the level of ABC approximations, one would like to make $\epsilon$ as close as possible
to zero; we shall see (and as discussed for example in \cite{fearnhead}) that there is a trade-off in making $\epsilon$ small and allowing the computational method to provide reliable results.
One point is that the computational methods are designed for the scenario where an indicator function is used in the ABC approximation. Some mention of what can be done when non-negative kernels
are used instead will be given. Note also, that all the algorithms are given when using the observed data $y_{1:n}$; the approach for noisy ABC is exactly the same except one replaces $y_{1:n}$ with $z_{1:n}$.

\subsection{MCMC Methods for I.I.D.~Models}\label{sec:mcmc_for_iid}

We will now consider sampling from $\pi^{\epsilon}(\theta|y_{1:n})\propto p_{\theta}^{\epsilon}(y_{1:n})\pi(\theta)$, $p_{\theta}^{\epsilon}(y_{1:n})=\prod_{i=1}^np_{\theta}^{\epsilon}(y_i)$, where
$p_{\theta}^{\epsilon}(y_i)$ is given in \eqref{eq:iid_abc_like}. An MCMC algorithm, is given  in Algorithm \ref{alg:mcmc}. In general this approach is not feasible because one does not know the term
$p_{\theta}^{\epsilon}(y_{1:n})$. The idea is to advocate algorithms by constructing targets on an auxiliary state-space (such as \eqref{eq:abc_iid_ex}) which can efficiently approximate this marginal MCMC approach;
this is the idea of the psuedo marginal \cite{andrieu2}. The point is to construct MCMC algorithms which can replicate the properties of the (hopefully) efficient marginal MCMC algorithm.
Note that the following ideas can be easily extended to the collapsed representation discussed in Section \ref{sec:approx}. Also note that some of this discussion can be found in \cite{jasra1}, where even more details are given.

\begin{algorithm}
\begin{enumerate}
\item \textbf{(Initialisation)} At $t=0$ sample $\theta_{0}$ from the prior.
\item \textbf{(M-H kernel)} For $t\geq1$:

\begin{itemize}
\item Sample $\theta'|\theta_{t-1}$ from a proposal $Q(\cdot|\theta_{t-1})$ with density
$q(\theta'|\theta_{t-1})$.
\item Accept the proposed state and set $\theta_{t}=\theta'$ with probability
\[
1\wedge\frac{p_{\theta'}^{\epsilon}(y_{1:n})}{p_{\theta_{t-1}}^{\epsilon}(y_{1:n})}\times\frac{\pi(\theta')q(\theta_{t-1}|\theta')}{\pi(\theta_{t-1})q(\theta'|\theta_{t-1})},
\]
otherwise set $\theta_{t}=\theta_{t-1}$. Set $t=t+1$ and return to the start of 2.
\end{itemize}
\end{enumerate}
\caption{A marginal M-H algorithm for $\pi^{\epsilon}(\theta|y_{1:n})$}
\label{alg:mcmc}
\end{algorithm}

\subsubsection{A Naive MCMC Algorithm}

Initially consider the ABC approximation when extended to the space
$\Theta\times\mathsf{Y}^{n}$: 
\[
\pi^{\epsilon}(\theta,u_{1:n}|y_{1:n}) \propto \Big(\prod_{i=1}^n\mathbb{I}_{\{u_i\in\mathsf{Y}:\mathsf{d}(u_i,y_i)<\epsilon\}}(u_i)p_{\theta}(u_i)\Big)\pi(\theta).
\]
 In Algorithm \ref{alg:simple}
we present a natural Metropolis-Hastings (M-H) proposal (which is essentially the MCMC method of \cite{majoram}) that could be used to sample from
$\pi^{\epsilon}(\theta,u_{1:n}|y_{1:n})$. The one-step transition kernel of the MCMC chain is usually described
as the \emph{M-H kernel} and follows from Step 2 in Algorithm \ref{alg:mcmc}. It should be noted that one can implement `early rejection' (e.g.~\cite{lee1}), in that if for some $i$
$u_i'\notin\{u\in\mathsf{Y}:\mathsf{d}(u,y_i)<\epsilon\}$ then the proposal can be terminated, as it will be rejected.

What is hopefully apparent from Algorithm \ref{alg:simple} is that, interpreting the Hastings ratio as an importance weight, the variance of this weight will play a critical
role in the efficiency of the algorithm. That is, in Algorithm \ref{alg:simple} the Hastings ratio is
$$
\frac{\prod_{i=1}^{n}\mathbb{I}_{ \{u\in\mathsf{Y}:\mathsf{d}(u,y_i)<\epsilon\} }(u_{i}^{'})}{\prod_{i=1}^{n}\mathbb{I}_{ \{u\in\mathsf{Y}:\mathsf{d}(u,y_i)<\epsilon\} }(u_i)}\times\frac{\pi(\theta')q(\theta|\theta')}{\pi(\theta)q(\theta'|\theta)}
$$
which is of a similar form to the Hastings ratio in Algorithm \ref{alg:mcmc}, except it replaces $p_{\theta'}^{\epsilon}(y_{1:n})/p_{\theta_{t-1}}^{\epsilon}(y_{1:n})$ with
estimates 
\begin{equation}
\frac{\prod_{i=1}^{n}\mathbb{I}_{ \{u\in\mathsf{Y}:\mathsf{d}(u,y_i)<\epsilon\} }(u_{i}^{'})}{\prod_{i=1}^{n}\mathbb{I}_{ \{u\in\mathsf{Y}:\mathsf{d}(u,y_i)<\epsilon\} }(u_i)}\label{eq:pseduo_ratio}.
\end{equation}
If one supposes that Algorithm \ref{alg:mcmc} is quite efficient, then one would hope to replicate similar properties, which will occur if the variability of \eqref{eq:pseduo_ratio} is not too large, or that the individual estimates (in the ratio) are quite well behaved.
These individual estimates in the ratio are (up-to a constant) unbiased estimates (expectation w.r.t.~$\prod_{i=1}^{n}p_{\theta}(u_{i})$) of $p_{\theta}^{\epsilon}(y_{1:n})$ and it is this property which allows Algorithm \ref{alg:simple} to provide
$\theta$ samples from the marginal distribution of interest.
As we have remarked previously
the choice $\pi^{\epsilon}(\theta,u_{1:n}|y_{1:n})$ is one amongst potentially many densities which will give our ABC posterior as a marginal.
That is, if one can find a different extended target, so that the variance (say for fixed $\theta$ - although that is another aspect of importance)
of the estimate of $p_{\theta}^{\epsilon}(y_{1:n})$ is smaller, then one might expect that the associated MCMC algorithm will mix better.

As noted, for example in \cite{jasra1},
as $n$ increases, the M-H kernel in Algorithm \ref{alg:simple}
will have an acceptance probability that falls quickly with $n$.
In particular, for any fixed $\theta,\epsilon$, the probability of obtaining
a sample with a non-zero acceptance probability
will fall at an exponential rate in $n$. This means
that this basic ABC MCMC approach will be inefficient for moderate
values of $n$.

\begin{algorithm}
\begin{itemize}
\item {Sample $\theta'|\theta$ from a proposal $Q(\cdot|\theta)$ with
density $q(\theta'|\theta)$.} 
\item {Sample $u_{1:n}^{'}$ from a distribution with joint density $\prod_{i=1}^{n}p_{\theta'}(u_{i})$
} 
\item {Accept the proposed state $\left(\theta',u'_{1:n}\right)$ with
probability: 
\[
1\wedge\frac{\prod_{i=1}^{n}\mathbb{I}_{ \{u\in\mathsf{Y}:\mathsf{d}(u,y_i)<\epsilon\} }(u_{i}^{'})}{\prod_{i=1}^{n}\mathbb{I}_{ \{u\in\mathsf{Y}:\mathsf{d}(u,y_i)<\epsilon\} }(u_i)}\times\frac{\pi(\theta')q(\theta|\theta')}{\pi(\theta)q(\theta'|\theta)}.
\]
 } 
\end{itemize}
\caption{M-H Proposal for basic ABC MCMC}
\label{alg:simple}
\end{algorithm}

\subsubsection{MCMC with $N$ Trials}

The above issue can be dealt with by using $N$ multiple trials (see for instance \cite{delmoral} in a SMC context), so that
for each data point the chance of getting an auxiliary data point in $\{u\in\mathsf{Y}:\mathsf{d}(u,y_i)<\epsilon\}$ is larger. This approach augments the posterior to a larger state-space, $\Theta\times\mathsf{Y}^{nN}$,
in order to target   the following density: 
\[
\widetilde{\pi}^{\epsilon}(\theta,u_{1:n}^{1:N}|y_{1:n})= \frac{\pi(\theta)p_{\theta}^{\epsilon}(y_{1:n})}{\int\pi(\theta)p_{\theta}^{\epsilon}(y_{1:n})d\theta}\frac{1}{p_{\theta}^{\epsilon}(y_{1:n})} \Big(\prod_{i=1}^{n}\frac{\sum_{j=1}^{N}\mathbb{I}_{ \{u\in\mathsf{Y}:\mathsf{d}(u,y_i)<\epsilon\} } (u_i^j)}
{N\int_{\{u\in\mathsf{Y}:\mathsf{d}(u,y_i)<\epsilon\}} du}\Big)
\prod_{i=1}^{n}\prod_{j=1}^{N}p_{\theta}(u_{i}^{j}).
\]
One can show that the marginal of interest $\pi^{\epsilon}(\theta|y_{1:n})$
is preserved, i.e. 
\[
\pi^{\epsilon}(\theta|y_{1:n})=\int_{\mathsf{Y}^{nN}}\widetilde{\pi}^{\epsilon}(\theta,u_{1:n}^{1:N}|y_{1:n})du_{1:n}^{1:N}=\int_{\mathsf{Y}^{n}}\pi^{\epsilon}(\theta,u_{1:n}|y_{1:n})du_{1:n}.
\]
In Algorithm \ref{alg:Ntry} we present an M-H kernel with invariant
density $\widetilde{\pi}^{\epsilon}$.  It is expected, especially with $N$ large,
that this algorithm should be more efficient than in Algorithm \ref{alg:simple} (which is the case $N=1$).
This is because as $N$ grows one expects 
$$
\Big(\prod_{i=1}^{n}\frac{\sum_{j=1}^{N}\mathbb{I}_{ \{u\in\mathsf{Y}:\mathsf{d}(u,y_i)<\epsilon\} } (u_i^j)}
{N\int_{\{u\in\mathsf{Y}:\mathsf{d}(u,y_i)<\epsilon\}} du}\Big)
$$
to be close to $p_{\theta}^{\epsilon}(y_{1:n})$ and hence the marginal algorithm described at the beginning of Section \ref{sec:mcmc_for_iid}.
It is noted in \cite{jasra1} that to ensure that the (relative) variance (w.r.t.~$\prod_{i=1}^{n}\prod_{j=1}^{N}p_{\theta}(u_{i}^{j})$) of the estimate
of $p_{\theta}^{\epsilon}(y_{1:n})$ does not depend upon $n$ (possibly the major contributor to the variance is $n$ growing),
one should set $N=\mathcal{O}(n)$.

It has been shown in \cite{lee1}
that even the M-H kernel in Algorithm \ref{alg:Ntry} does not always
perform well. It can happen that the chain gets often stuck in regions
of the state-space $\Theta$ where 
\[
\alpha_{i}(y_i,\epsilon,\theta):=\int_{\{u\in\mathsf{Y}:\mathsf{d}(u,y_i)<\epsilon\}}p_{\theta}(u)du
\]
is small. Note that the $\alpha_{i}(y_i,\epsilon,\theta)$ could be larger by making $\epsilon$ larger, but this is at the cost that the ABC approximation is worse; before resorting to this, one might try to improve upon the algorithm.
Given this notation, we remark that
$$
p_{\theta}^{\epsilon}(y_{1:n}) = \prod_{i=1}^n\frac{\alpha_{i}(y_{i},\epsilon,\theta)}{\int_{\{u\in\mathsf{Y}:\mathsf{d}(u,y_i)<\epsilon\}}du}.
$$

\begin{algorithm}
\begin{itemize}
\item {Sample $\theta'|\theta$ from a proposal $Q(\cdot|\theta)$ with
density $q(\theta'|\theta)$.} 
\item {Sample $\left.u'\right._{1:n}^{1:N}$ from a distribution with joint
density $\prod_{i=1}^{n}\prod_{j=1}^{N}p_{\theta'}(u_{i}^{j})$
.} 
\item {Accept the proposed state $(\theta',\left.u'\right._{1:n}^{1:N})$
with probability: 
\[
1\wedge\frac{\prod_{i=1}^{n}(\frac{1}{N}\sum_{j=1}^{N}\mathbb{I}_{\{u\in\mathsf{Y}:\mathsf{d}(u,y_i)<\epsilon\}}(\left.u'\right._{i}^{j}))}{\prod_{i=1}^{n}(\frac{1}{N}\sum_{j=1}^{N}\mathbb{I}_{\{u\in\mathsf{Y}:\mathsf{d}(u,y_i)<\epsilon\}}(u_{i}^{j}))}\times\frac{\pi(\theta')q(\theta|\theta')}{\pi(\theta)q(\theta'|\theta)}.
\]
 } 
\end{itemize}
\caption{M-H Proposal for ABC with N trials}
\label{alg:Ntry}
\end{algorithm}

\subsubsection{MCMC with a Random Number of Trials}

The following MCMC kernel is developed in \cite{jasra1} and is based on the $N-$hit kernel
of \cite{lee}, which was adapted to account for the data structure in the model.
One of the problems of the MCMC kernel in Algorithm \ref{alg:Ntry} is that when $\alpha_{i}(y_i,\epsilon,\theta)$ is small,
the algorithm gets stuck. The approach is that, when this occurs one tries to use more computational effort, without necessarily being more clever.
The idea is that instead of sampling $N$ samples for each data point $i$, the procedure samples data-points until there are $N$ points which
lie in $\{u\in\mathsf{Y}:\mathsf{d}(u,y_i)<\epsilon\}$. \cite{jasra1} show how this is possible by defining an appropriate target density, for $N\geq2$, $m_{k}\in\{N,N+1,\dots,\}$,
$1\leq k\leq n$: 
$$
\hat{\pi}^{\epsilon}(\theta,m_{1:n}|y_{1:n})=  
$$
$$
\frac{\pi(\theta)p_{\theta}^{\epsilon}(y_{1:n})}{\int\pi(\theta)p_{\theta}^{\epsilon}(y_{1:n})d\theta}\frac{1}{p_{\theta}^{\epsilon}(y_{1:n})}
\prod_{i=1}^{n}\frac{N-1}{(m_{i}-1)\int_{\{u\in\mathsf{Y}:\mathsf{d}(u,y_i)<\epsilon\}}du}
\prod_{i=1}^{n}\binom{m_{i}-1}{N-1}\alpha_{i}(y_{i},\theta,\epsilon)^{N}(1-\alpha_{i}(y_{i},\theta,\epsilon))^{m_{i}-N}.
$$
It is shown in \cite{jasra1} that
that the marginal w.r.t. $\theta$ is the one of interest: 
\[
\pi^{\epsilon}(\theta|y_{1:n})=\sum_{m_{1:n}\in\{N,N+1,\dots\}^{n}}\widehat{\pi}_n^{\epsilon}(\theta,m_{1:n}|y_{1:n}).
\]

The MCMC kernel is given in Algorithm \ref{alg:new}. \cite{jasra1} show that one should set $N=\mathcal{O}(n)$ to control the relative variance (w.r.t.~$n$) w.r.t.~$\prod_{i=1}^{n}\binom{m_{i}-1}{N-1}\alpha_{i}(y_{i},\theta,\epsilon)^{N}(1-\alpha_{i}(y_{i},\theta,\epsilon))^{m_{i}-N}$ of the estimate of $p_{\theta}^{\epsilon}(y_{1:n})$ 
which is 
$$
\prod_{i=1}^{n}\frac{N-1}{(m_{i}-1)\int_{\{u\in\mathsf{Y}:\mathsf{d}(u,y_i)<\epsilon\}}du}.
$$

\begin{algorithm}
\begin{itemize}
\item {Sample $\theta'|\theta$ from a proposal $Q(\cdot|\theta)$ with
density $q(\theta'|\theta)$.} 
\item {For $i=1,\dots,n$ repeat the following: sample $u_{i}^{1},u_{i}^{2},\dots$
with probability density $p_{\theta'}(u_{i})$
until there are $N$ samples lying in $\{u\in\mathsf{Y}:\mathsf{d}(u,y_i)<\epsilon\}$; the
number of samples to achieve this (including the successful trial)
is $m_{k}'$. } 
\item {Accept $\left(\theta',m_{1:n}'\right)$ with probability: 
\[
1\wedge\frac{\prod_{i=1}^{n}\frac{1}{m_{i}'-1}}{\prod_{i=1}^{n}\frac{1}{m_{i}-1}}\times\frac{\pi(\theta')q(\theta|\theta')}{\pi(\theta)q(\theta'|\theta)}.
\]
 } 
\end{itemize}
\caption{M-H Proposal with a random number of trials}
\label{alg:new}
\end{algorithm}

\subsubsection{Simulations}\label{sec:simos_iid}

We will consider the following ODTS model, also discussed in \cite{jasra1}.
Set, for $(Y_{n},X_{n})\in\mathbb{R}\times\mathbb{R}^{+}$ 
\begin{eqnarray*}
Y_{n+1} & = & \kappa_{n}\quad n\in\mathbb{N}_{0}\\
X_{n+1} & = & \beta_{0}+\beta_{1}X_{n}+\beta_{2}Y_{n+1}^{2}\quad n\in\mathbb{N}_{0}
\end{eqnarray*}
 where $\kappa_{n}|x_{n}\stackrel{\textrm{ind}}{\sim}\mathcal{S}(0,x_{n}, s_{1},s_{2})$
(i.e.~a stable distribution, with location 0, scale $X_{n}$ and
asymmetry and skewness parameters $s_{1},s_{2}$; see \cite{chambers} for more information). We
set 
\[
X_{0}\sim\mathcal{G}a(a,b),\quad\beta_{0},\beta_{1},\beta_{2}\sim\mathcal{G}a(c,d)
\]
 where $\mathcal{G}a(a,b)$ is a Gamma distribution with mean $a/b$
and $\theta=(\beta_{0:2})\in(\mathbb{R}^{+})^{3}$. This is a GARCH(1,1)
model with an \emph{intractable likelihood}, i.e.~one cannot perform exact parameter inference and has to resort to approximations.

We will consider a comparison of Algorithms \ref{alg:Ntry} and \ref{alg:new}. 
and we will fit the model to data from  the S\&P 500 index from 03/1/11 to 14/02/13 (533 data points).  In the priors, we set $a=c=2$ and $b=d=1/8$, which are not overly informative.
In addition, $s_1=1.5$ and $s_2=0$; see \cite{jasra1} for explanations. We consider $\epsilon\in\{0.01,0.5\}$ and only a noisy ABC approximation of the model.
 The MCMC proposals on the parameters are normal random-walks on the log-scale and for both algorithms we set $N=250$.

In Figure \ref{fig:tracee05} (a) we present the autocorrelation plot of 50000 iterations of both MCMC kernels when $\epsilon=0.5$ (for only the parameter $\beta_0$ - other parameters can be seen in \cite{jasra1}). 
Algorithm \ref{alg:Ntry} took about 0.30 seconds per iteration and Algorithm \ref{alg:new} took about 1.12 seconds per iteration. 
The plot shows that both algorithms appear to mix across the state-space in a very reasonable way. The MCMC procedure associated to Algorithm \ref{alg:new} takes much longer and in this situation does not appear to be required.

In Figure \ref{fig:tracee05} (b)  we can observe the autocorrelation plots from a particular (typical) run when $\epsilon=0.01$. In this case, both algorithms are run for 200000 iterations.
Algorithm \ref{alg:Ntry} took about 0.28 seconds per iteration and Algorithm \ref{alg:new} took about 2.06 seconds per iteration.
Whilst the computational time for Algorithm \ref{alg:new} is considerably more than Algorithm \ref{alg:Ntry}, in the same amount of computation time, it still moves more around the state
space as suggested by Figure \ref{fig:tracee05} (b); algorithm runs of the same length are provided for presentational purposes.
We were unable to tune Algorithm \ref{alg:Ntry}  to make it mix well in this example and with many efforts. Conversely, for Algorithm \ref{alg:new} we expended considerably less effort for very reasonable performance.
The results here suggest that one
should prefer Algorithm \ref{alg:new} only in challenging scenarios, as it can be very expensive in practice.

\begin{figure}[h]
\subfigure[$\beta_0$]{{\includegraphics[width=0.49\textwidth,height=7.5cm]{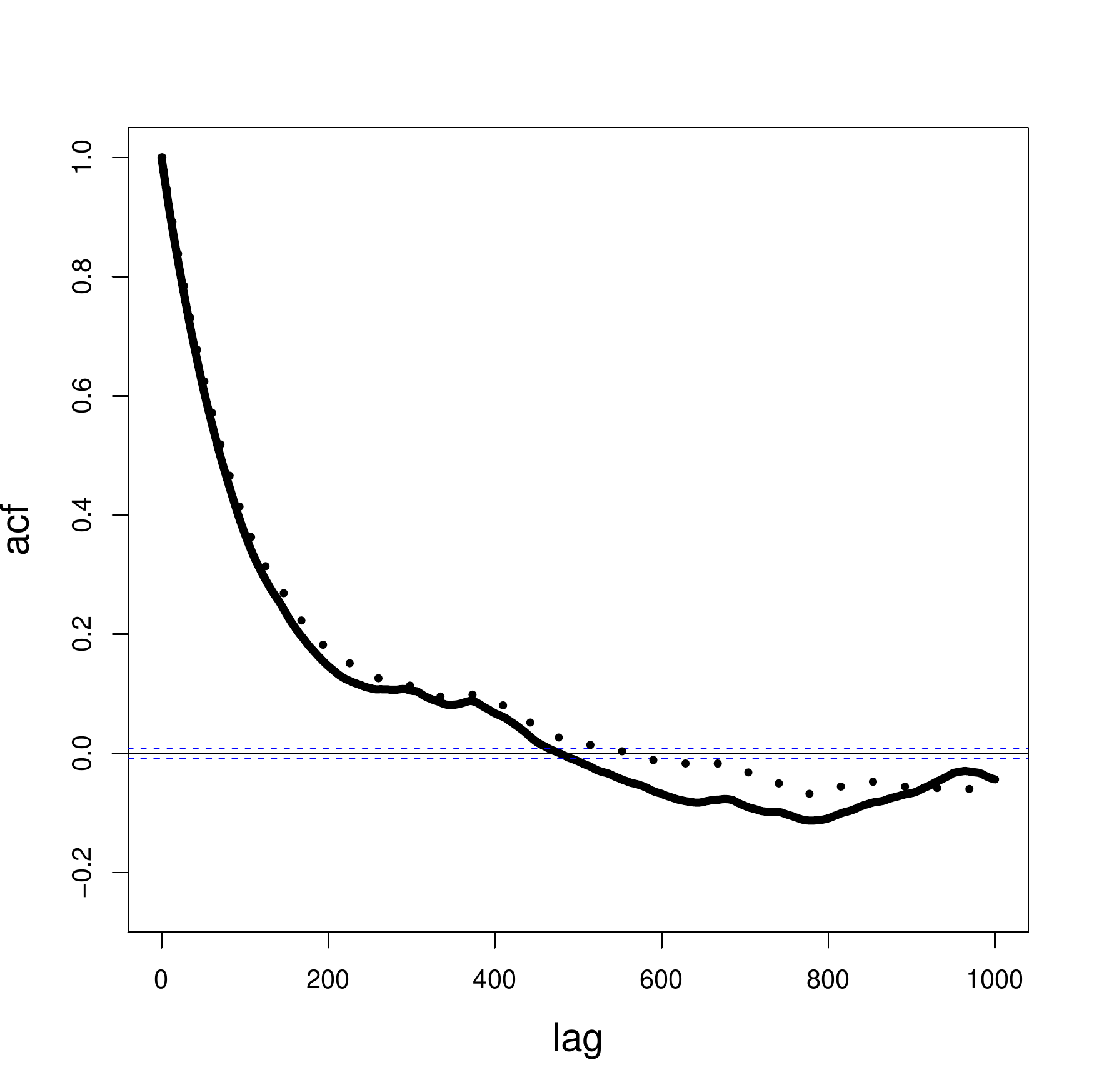}}}
\subfigure[$\beta_0$]{{\includegraphics[width=0.49\textwidth,height=7.5cm]{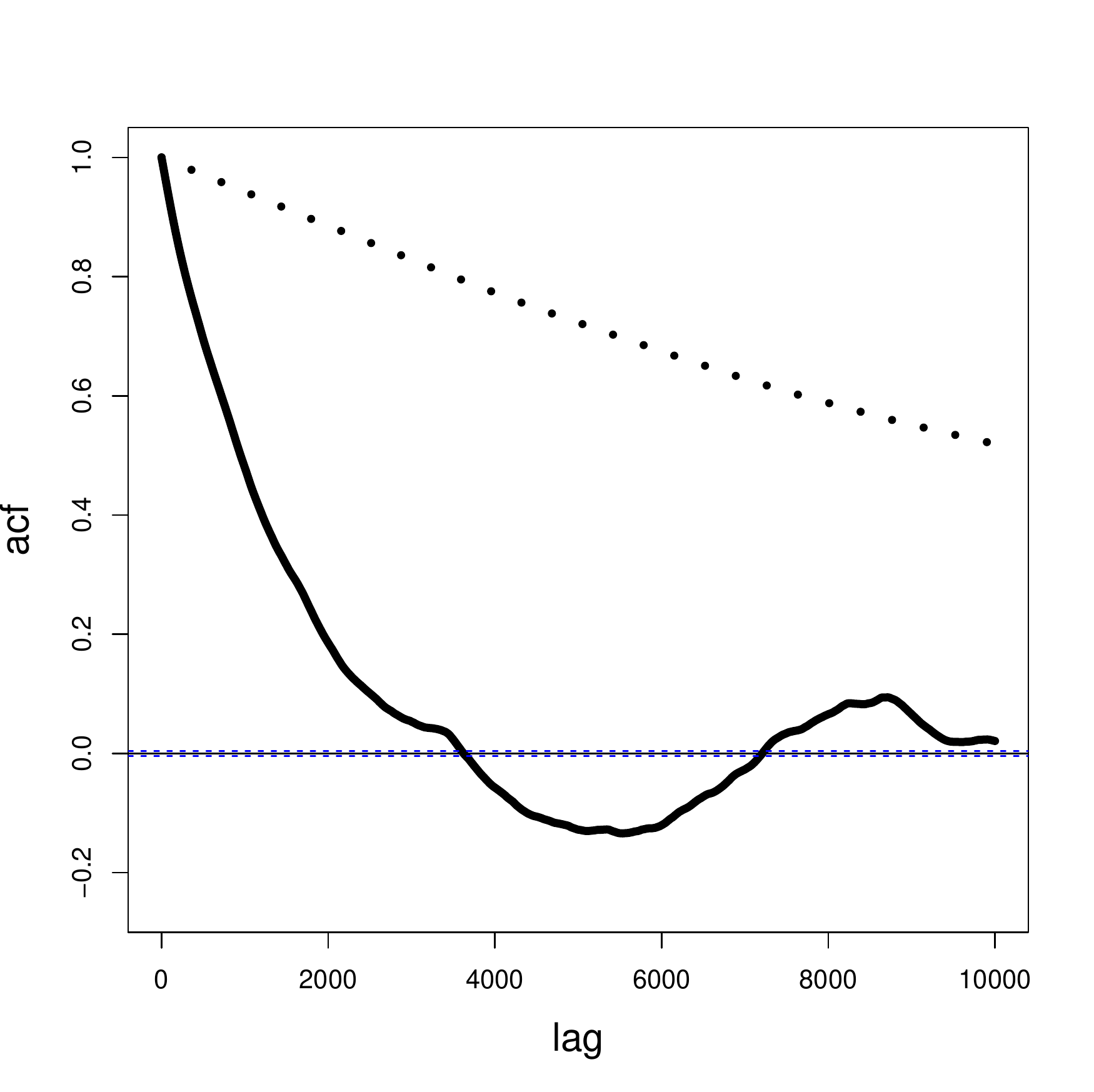}}}
\caption{Autocorrelation Plots for the Sampled Parameter $\beta_0$. In the left plot (a), we run Algorithm \ref{alg:Ntry} (dot) and \ref{alg:new} (full) for 50000 iterations (both with $N=250$)
on the S \& P 500 data, associated to noisy ABC and $\epsilon=0.5$. 
In the right plot we run Algorithm \ref{alg:Ntry} (dot) and \ref{alg:new} (full) for 200000 iterations (both with $N=250$)
on the S \& P 500 data, associated to noisy ABC and $\epsilon=0.01$.
The dotted horizontal lines are a default confidence interval generated by the R package.}
\label{fig:tracee05} 
\end{figure}

\subsubsection{Using Non-Negative Kernels}

If one chooses to use a non-negative kernel in the ABC approximation, then some the issues discussed above will not be so relevant.
In such scenarios one may attempt to use Algorithm \ref{alg:simple} replacing the indicators with non-negative kernels. If this does not work
well, one can resort to more advanced methods such as those in \cite{andrieu}.

\subsection{SMC and MCMC Methods for HMMs}

\subsubsection{SMC Algorithms for Filtering}

We first begin with SMC algorithms which can approximate the ABC filter, that is, when $\theta$ is fixed. As we shall see, these type of algorithms will form a building block of the MCMC
algorithms to be presented in the next section.

We first present a standard SMC algorithm in Algorithm \ref{alg:smc_abc}, which can be found in, for example \cite{calvet,jasra}.
The algorithm will allow one to approximate finite expecations w.r.t.~the ABC filter with density $\pi_{\theta}^{\epsilon}(x_n,u_n|y_{1:n})$ via
$$
\sum_{i=1}^N\frac{W_n^i}{\sum_{j=1}^N W_n^j}\xi(x_n^i,u_n^i)
$$
where $\xi:\mathsf{X}\times\mathsf{Y}\rightarrow\mathbb{R}$ is an integrable function (w.r.t.~the ABC filter). Convergence results as $N$ grows can be found in \cite{delmoral1,delmoral2}.
In Algorithm \ref{alg:smc_abc} an (unbiased (expectation w.r.t.~the randomness in the SMC algorithm) - see \cite{delmoral1,delmoral2}) estimate of $p_{\theta}^{\epsilon}(y_{1:n})=\int_{\mathsf{X}^n}p_{\theta}^{\epsilon}(y_{1:n},x_{1:n})dx_{1:n}$ is 
\begin{equation}
p_{\theta}^{\epsilon,N}(y_{1:n}) = \prod_{i=1}^n\bigg\{\Big(\frac{1}{\int_{\{u\in\mathsf{Y}:\mathsf{d}(u,y_i)<\epsilon\}}du}\Big)\Big(\frac{1}{N}\sum_{j=1}^N W_i^j\Big)\bigg\}.\label{eq:nc_smc}
\end{equation}
This point is useful later on. The algorithm will be used as part of an MCMC proposal, that is, SMC within MCMC.

On inspection of Algorithm \ref{alg:smc_abc}, it can be observed that there is a possibility that all the weights $\{W_n^i\}_{1\leq i \leq N}$  are zero and at which point, the algorithm is said to have collapsed.
This could be alleviated by making $\epsilon$ larger, but then one loses accuracy in the ABC approximation; before doing this, one may want to try improving the SMC algorithm.
The property of the algorithm collapsing is not desirable, especially in the context in which we will use it - the idea of using it as part of a proposal in MCMC. That is to say, that the simulation of Algorithm \ref{alg:smc_abc} will be
the major contributor to the computational cost of an associated MCMC algorithm and, if the SMC approach collapses (especially close to time $n$) one will have spent considerable effort on a proposal which will be rejected.

One algorithm which has been proposed in the literature and which can potentially deal with this problem is the alive particle filter (see \cite{amerin,jasra2}, see also \cite{legland1}). This idea is related
to Algorithm \ref{alg:new} in that it uses a random amount of effort to ensure that there are $N$ samples alive. The alive particle filter is presented in Algorithm \ref{alg:alive_smc}. It should be noted
that this algorithm only retain $N-1$ of the alive samples - this allows the following property: an (unbiased (expectation w.r.t.~the randomness in the SMC algorithm) - see \cite{amerin,jasra2}) estimate of $p_{\theta}^{\epsilon}(y_{1:n})$ is 
\begin{equation}
p_{\theta}^{\epsilon,N}(y_{1:n}) = \prod_{i=1}^n\frac{N-1}{(m_i-1)(\int_{\{u\in\mathsf{Y}:\mathsf{d}(u,y_i)<\epsilon\}}du)}.\label{eq:nc_alive}
\end{equation}
The unbiased property will be useful for the MCMC algorithms in the next Section. It should be noted that one can also approximate the ABC filter just as above and convergence results can be found in \cite{jasra2} (note that
the result in \cite{legland1} are for different estimates). This algorithm has been further extended in \cite{persing} and we refer the reader to that article for details.

\begin{algorithm}
\begin{enumerate}
\item {Intialization: For $i\in\{1,\dots,N\}$, sample  $x_1^i$ from a distribution with density $q_{\theta}(x_1)$ and $u_1^i|x_1^i$ from the likelihood. Compute the weight $W_1^i = \mathbb{I}_{\{u\in\mathsf{Y}:\mathsf{d}(u,y_1)<\epsilon\}}(u_1^i)p_{\theta}(x_1^i|x_0)/q_{\theta}(x_1^i)$. Set $n=1$}
\item{Resampling:  Resample the particles (setting $W_n^i=1$). Denote the resulting particles $\{(\hat{x}_n^i,\hat{u}_n^i)\}_{1\leq i \leq N}$.}
\item{Sampling: For $i\in\{1,\dots,N\}$, sample  $x_{n+1}^i|\hat{x}_{n}^i$ from a distribution with density $q_{\theta}(x_{n+1}|\hat{x}_n^i)$ and $u_{n+1}^i|x_{n+1}^i$ from the likelihood. Compute the weight $W_{n+1}^i = \mathbb{I}_{\{u\in\mathsf{Y}:\mathsf{d}(u,y_{n+1})<\epsilon\}}(u_{n+1}^i)p_{\theta}(x_{n+1}^i|\hat{x}_{n}^i)/q_{\theta}(x_{n+1}^i|\hat{x}_{n}^i)$. Set $n=n+1$ and go to 2.}
\end{enumerate}
\caption{SMC for ABC.}
\label{alg:smc_abc}
\end{algorithm}

\begin{algorithm}
\begin{enumerate}
\item {Intialization: Sample  $x_1^1,\dots$ from a distribution with density $q_{\theta}(x_1)$ and $u_1^1|x_1^1,\dots$ from the likelihood until there are $N$ samples
with $W_1^i>0$, $W_1^i = \mathbb{I}_{\{u\in\mathsf{Y}:\mathsf{d}(u,y_1)<\epsilon\}}(u_1^i)p_{\theta}(x_1^i|x_0)/q_{\theta}(x_1^i)$, record the number of samples required to do this, call it $m_1$. Set $n=1$ and discard all samples except the first $N-1$ with non-zero weight (call these the alive particle).}
\item{Iteration: Resample (amongst the alive particles) $\hat{x}_n^1,\dots$ according to the weights $W_n^i$ then sample  $x_{n+1}^1|\hat{x}_{n}^1,\dots$ from a distribution with density $q_{\theta}(x_{n+1}|\hat{x}_{n}^1),\dots$ and $u_{n+1}^1|x_{n+1}^1,\dots$ from the likelihood
until there are $N$ samples with $W_{n+1}^i>0$, $W_{n+1}^i =\mathbb{I}_{\{u\in\mathsf{Y}:\mathsf{d}(u,y_{n+1})<\epsilon\}}(u_{n+1}^i)p_{\theta}(x_{n+1}^i|\hat{x}_{n}^i)/q_{\theta}(x_{n+1}^i|\hat{x}_{n}^i)$, record the number of samples required to do this, call it $m_{n+1}$.
Set $n=n+1$ and discard all samples except the first $N-1$ with non-zero weight and return to the start of 2..}
\end{enumerate}
\caption{Alive SMC.}
\label{alg:alive_smc}
\end{algorithm}

To end the section we note that if smoothing is of interest, then any of the standard SMC approaches can be adopted such as \cite{briers,dds1}; see \cite{martin} for some implementations.

\subsubsection{Particle MCMC Algorithms}

We now consider trying to sample from the ABC approximation of the posterior on the parameters, i.e.~$\pi^{\epsilon}(\theta|y_{1:n})\propto p_{\theta}^{\epsilon}(y_{1:n})\pi(\theta)$;
as this is not possible, we will construct algorithms on extended state-spaces, such as in Section \ref{sec:mcmc_for_iid}. It should be noted that algorithms which attempt to
sample the ABC posterior \eqref{eq:abc_hmm_post}
using `standard' MCMC methods, often do not work well; we direct the reader to \cite{andrieu} for an explanation.
We present two basic particle marginal Metropolis-Hastings algorithms (PMMH) \cite{andrieu}, which can work well in practice.

The first MCMC algorithm that we present is in Algorithm \ref{alg:pmmh}. This algorithm uses Algorithm \ref{alg:smc_abc} to propose new points in the state-space;
as remarked above, due to the possiblity that the SMC may collapse this algorithm may not be entirely satisfactory. 
We do not give details of the target density (see \cite{andrieu}) but note that $\pi^{\epsilon}(\theta|y_{1:n})$ is a marginal (which essentially follows because the estimate $p_{\theta}^{\epsilon,N}(y_{1:n})$ is unbiased)
 and one of the key issues
is the relative variance of $p_{\theta}^{\epsilon,N}(y_{1:n})$ w.r.t.~the SMC algorithm that is simulated.
In order to select $N$ one can appeal to theoretical
results in SMC (\cite{cerou1}) which, as noted by  \cite{andrieu} say that one should select $N=\mathcal{O}(n)$ to control the relative variance of the estimate $p_{\theta}^{\epsilon,N}(y_{1:n})$
which appears in the acceptance probability. More precise results for selecting $N$ can be found in \cite{pitt,sherlock}. Note that to fit the collapsed representation of a HMM (as in Section \ref{sec:collapse}) one
can use a PMMH algorithm very similar to Algorithm \ref{alg:pmmh}; see \cite{murray} for a PMMH algorithm for a collapsed representation of a HMM.

\begin{algorithm}
\begin{itemize}
\item {Sample $\theta'|\theta$ from a proposal $Q(\cdot|\theta)$ with
density $q(\theta'|\theta)$.} 
\item{Run Algorithm \ref{alg:smc_abc} with parameter $\theta'$ and record the estimate $p_{\theta'}^{\epsilon,N}(y_{1:n})$ in \eqref{eq:nc_smc}.}
\item{Accept $\theta'$ with probability:
$$
1\wedge \frac{p_{\theta'}^{\epsilon,N}(y_{1:n})}{p_{\theta}^{\epsilon,N}(y_{1:n})} \times \frac{\pi(\theta')q(\theta|\theta')}{\pi(\theta)q(\theta'|\theta)}.
$$
}
\end{itemize}
\caption{Particle Marginal Metropolis-Hastings, with standard SMC.}
\label{alg:pmmh}
\end{algorithm}

The second MCMC algorithm we present is in Algorithm \ref{alg:pmmh_alive}. This MCMC kernel uses Algorithm \ref{alg:alive_smc} as part of its proposal mechanism.
Essentially this algorithm removes the issue of the SMC collapsing, but the associated cost of each proposal may be significantly higher than in Algorithm \ref{alg:pmmh}.
As for the previous MCMC algorithm, we do not give details of the target (see \cite{jasra2}) but one of the key issues why $\pi^{\epsilon}(\theta|y_{1:n})$ is a marginal
is that the estimate $p_{\theta}^{\epsilon,N}(y_{1:n})$ is unbiased. The remarks for setting $N$ are the same as for Algorithm \ref{alg:pmmh}, although, we note that
(under assumptions) the work in \cite{jasra2} indicates that the relative variance of \eqref{eq:nc_alive} is less than that of \eqref{eq:nc_smc} and that potentially Algorithm
\ref{alg:pmmh_alive} is more efficient than Algorithm \ref{alg:pmmh}; we discuss further in the next section. It should be noted that for both Algorithms \ref{alg:pmmh} and \ref{alg:pmmh_alive}
inference for the state-sequence $x_{1:n}$ can also be performed and we refer to \cite{andrieu} for details. We also note that a version of Algorithm \ref{alg:pmmh_alive} has been adopted
in \cite{drov} for time series models, when $\mathsf{Y}$ is countably-infinite; in such scenarios the data can be matched leading to exact inference - we refer the reader to \cite{drov} for more details.

\begin{algorithm}
\begin{itemize}
\item {Sample $\theta'|\theta$ from a proposal $Q(\cdot|\theta)$ with
density $q(\theta'|\theta)$.} 
\item{Run Algorithm \ref{alg:alive_smc} with parameter $\theta'$ and record the estimate $p_{\theta'}^{\epsilon,N}(y_{1:n})$ in \eqref{eq:nc_alive}.}
\item{Accept $\theta'$ with probability:
$$
1\wedge \frac{p_{\theta'}^{\epsilon,N}(y_{1:n})}{p_{\theta}^{\epsilon,N}(y_{1:n})} \times \frac{\pi(\theta')q(\theta|\theta')}{\pi(\theta)q(\theta'|\theta)}.
$$
}
\end{itemize}
\caption{Particle Marginal Metropolis-Hastings, with Alive SMC.}
\label{alg:pmmh_alive}
\end{algorithm}

The section is concluded by remarking that there are more advanced algorithms in the literature. Examples include the approach in \cite{persing} and use of the particle Gibbs sampler
(\cite{andrieu}) for example using the alive particle filter as in \cite{zhang} (who also considers the collapsed representation for doubly intractable HMMs - the associated PMMH algorithm is very similar
to Algorithm \ref{alg:pmmh}).

\subsubsection{Simulations}

We consider the following HMM, for $n\geq 1$
\begin{align*}
Y_n =& \phi_n \beta \exp (X_n) \\
X_n =& \varepsilon X_{n-1} + \eta_n
\end{align*}
where $x_0=0$, $\phi_n\sim\mathcal{S}t(0,s_1,s_2,s_3)$ (a stable distribution with location
parameter 0, scale $s_1$ asymmetry parameter $s_2$ and skewness parameter $s_3$) and $\eta_n \sim \mathcal{N}(0,c)$. 
We set $\theta=(\beta,c,\varepsilon)$, with priors 
$c\sim \mathcal{IG}(2,1/100)$, $\varepsilon\sim \mathcal{IG}(2,1/50)$
($\mathcal{IG}(a,b)$ is an inverse Gamma distribution with mode $b/(a+1)$)
and $\beta\sim \mathcal{N}(0,10)$. Note that the inverse Gamma distributions have infinite variance.
We again consider the daily index of the S \& P 500 index between 03/01/2011 $-$ 14/02/2013 (533 data points). 

We consider two scenarios to compare Algorithms \ref{alg:pmmh} and \ref{alg:pmmh_alive}. 
In the first situation we set $s_2=1.75$ and in the second, $s_2=1.2$, with $s_1=s_3=1$ in both situations. In the first case, we make $\epsilon$ not too large
and in the second, $\epsilon$ is significantly reduced. The noisy ABC approximation is used.
The SMC and alive SMC algorithms used the transition dynamics of the $\{X_n\}_{n\geq 1}$ as the $q_{\theta}$ densities in Algorithms \ref{alg:smc_abc} and \ref{alg:alive_smc}.
Both algorithms are run for about the same computational time, such that Algorithm \ref{alg:pmmh_alive} is run for 20000 iterations. The parameters are initialized with draws
from the priors. The proposal on $\beta$ is a normal random walk
and for $(c,\varepsilon)$ a gamma proposal centered at the current point with proposal variance scaled to obtain reasonable acceptance rates. We consider $N=1000$ and for  Algorithm \ref{alg:pmmh_alive} this value is lower to allow the same computational time.

Our results are presented in Figure \ref{fig:pmcmc}.
In Figures \ref{fig:pmcmc} (a)-(b) we can see the output in the case that $s_2=1.75$ (only for $\beta$, more results are in \cite{jasra2}). 
It appears that both algorithms perform very well; the acceptance rates were around 0.25 for each case.
For this scenario one would prefer Algorithm \ref{alg:pmmh} as the algorithmic performance is very good, with a removal of a random computational
cost per iteration.

In Figures \ref{fig:pmcmc} (c)-(d) the output when $s_2=1.2$ is displayed. In Figure \ref{fig:pmcmc} (c)
 we can see that Algorithm \ref{alg:pmmh} performs very badly, barely moving across the parameter space (despite significant efforts to tune the algorithm),
whereas the new PMMH algorithm has very reasonable performance (Figure \ref{fig:pmcmc} (d)). In this case, $\epsilon$ is very small,
and Algorithm \ref{alg:smc_abc} collapses very often, which leads to the undesirable performance displayed. As in Section \ref{sec:simos_iid}
the results here suggest that Algorithm \ref{alg:pmmh_alive} might be preferred in difficult sampling scenarios, but in simple cases it does not seem to be required.

\begin{figure}[h]
\subfigure[$\beta$, Algorithm \ref{alg:pmmh}, $s_2=1.75$]{{\includegraphics[width=0.49\textwidth,height=7.5cm]{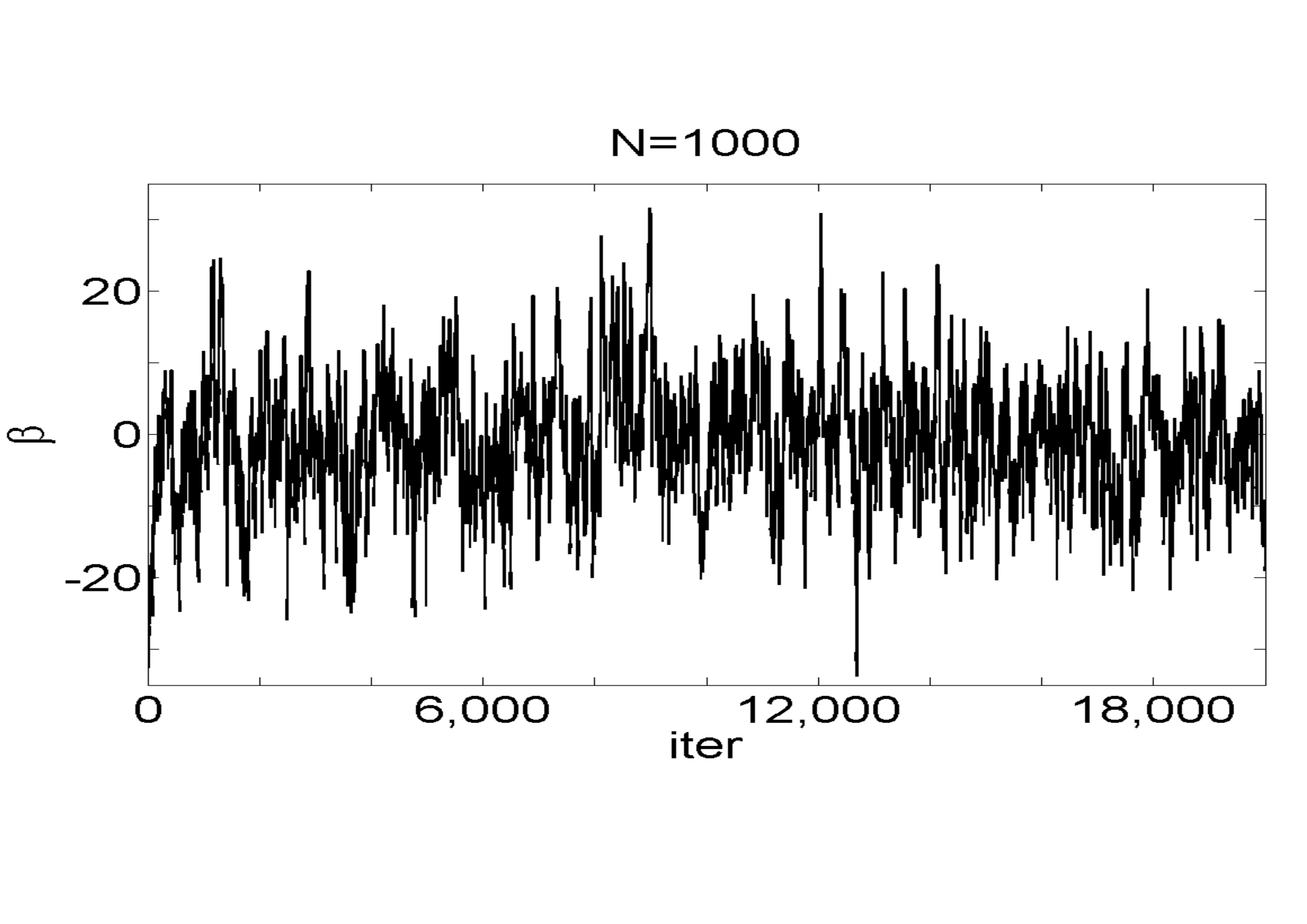}}}
\subfigure[$\beta$, Algorithm \ref{alg:pmmh_alive}, $s_2=1.75$]{{\includegraphics[width=0.49\textwidth,height=7.5cm]{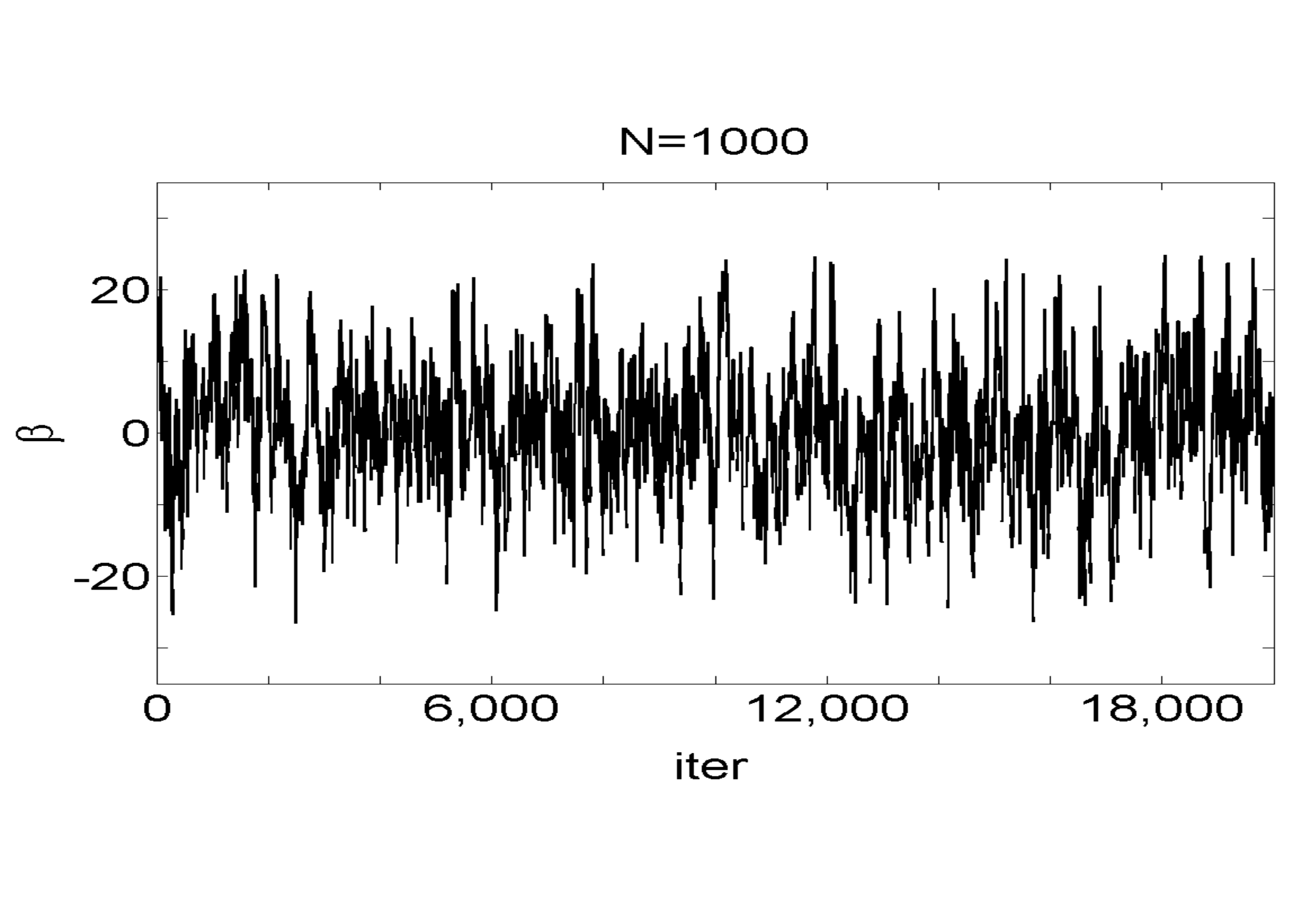}}}
\subfigure[$\beta$, Algorithm \ref{alg:pmmh}, $s_2=1.2$]{{\includegraphics[width=0.49\textwidth,height=7.5cm]{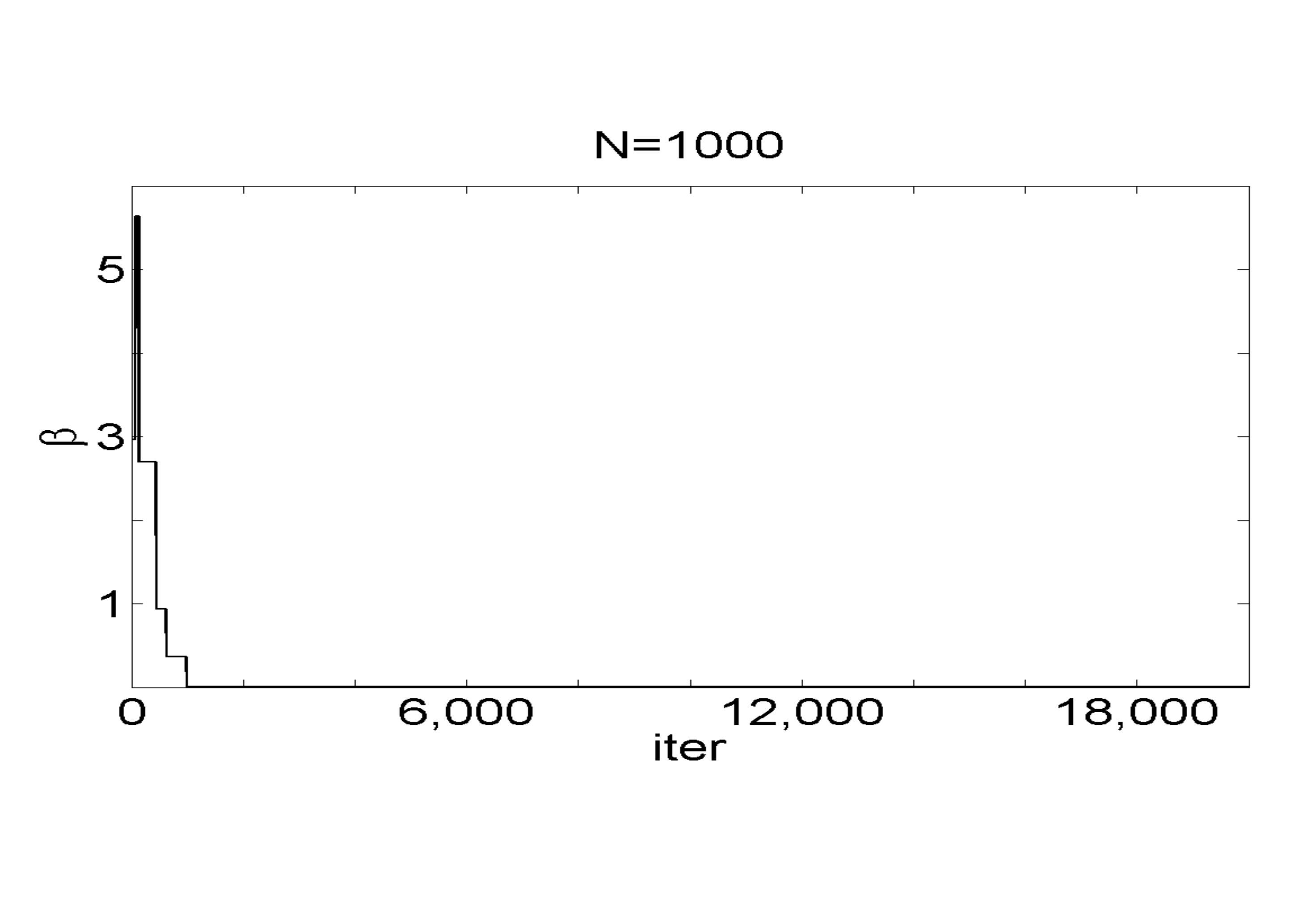}}}
\subfigure[$\beta$, Algorithm \ref{alg:pmmh_alive}, $s_2=1.2$]{{\includegraphics[width=0.49\textwidth,height=7.5cm]{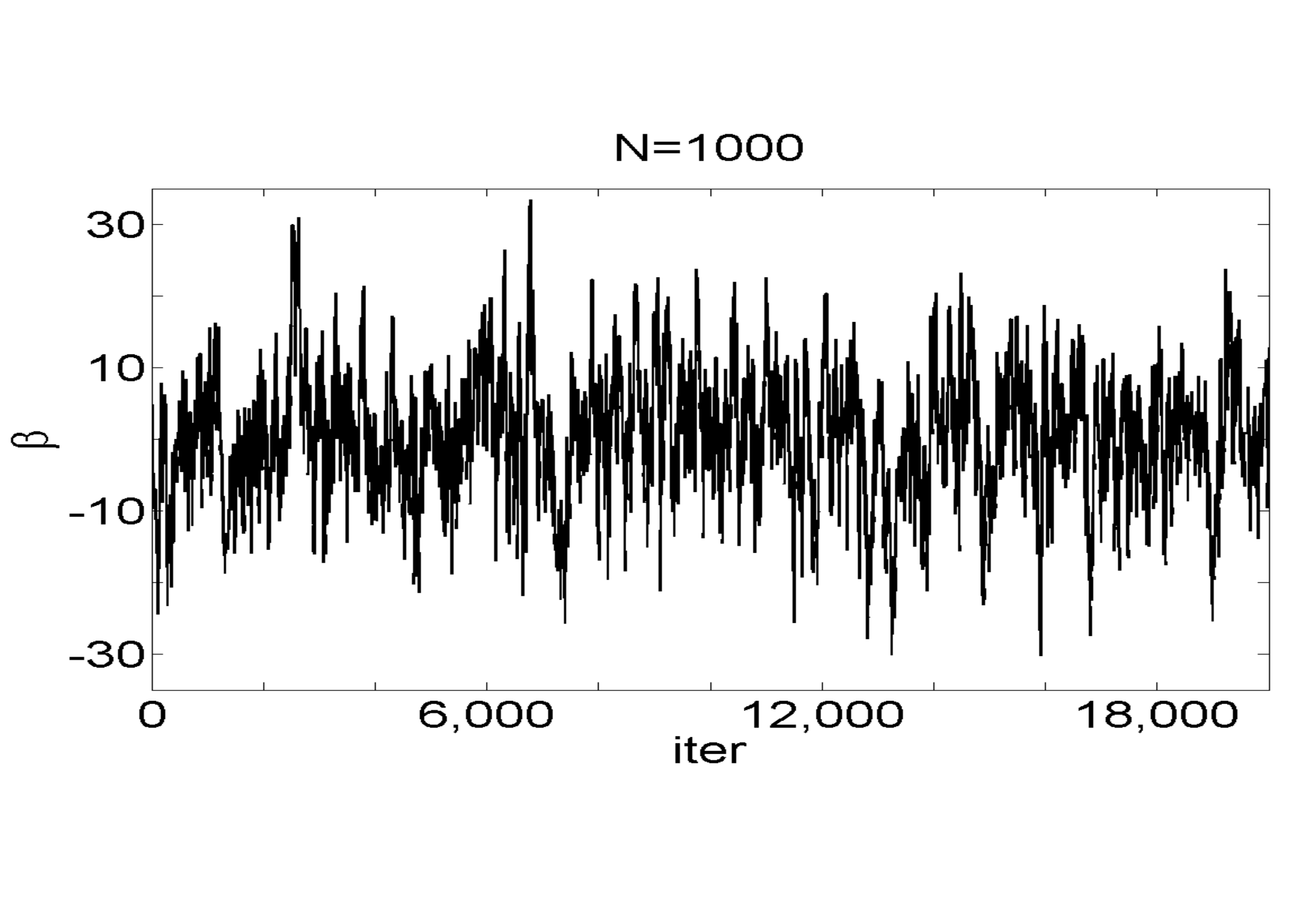}}}
\caption{Trace Plots for the Sampled Parameter $\beta$. We run Algorithms \ref{alg:pmmh} and \ref{alg:pmmh_alive} with $N=1000$ (for Algorithm \ref{alg:pmmh_alive} this value is lower to allow a similar computational time) for two different values of $s_2$. Algorithm \ref{alg:pmmh_alive} is run for 20000 iterations.}
\label{fig:pmcmc} 
\end{figure}

\subsubsection{Using Non-Negative Kernels}

As for i.i.d.~models if one chooses to use a non-negative kernel in the ABC approximation, then some the issues discussed above will not be so relevant.
A starting point for MCMC algorithms would be Algorithm \ref{alg:pmmh} which uses Algorithm \ref{alg:smc_abc} except with the indicator replaced with the non-negative kernel.

\subsection{Expectation-Propagation}

The computational methods discussed so far focus on exact computation for the ABC approximation. The methods are certainly not simple to understand and often take a considerable amount of effort in both coding
as well as time to run. An idea put fourth in \cite{chopin1}, for exactly the class of models (and ABC approximations) considered in this article, is to try to infer the ABC posterior using \emph{approximate}
computational methods. The authors also state that the approach can be used as an initial investigation of data, instead of possibly the `final' data analysis. The idea is to use the expectation-propagation algorithm
of \cite{minka}. The authors successfully implement the methodology for a wide variety of statistical models. They also compare the methods to the PMMH methods investigated in this article. Considerable gains in computational time
are reported, with some loss of accuracy in inference. This method seems to be very promising and an interesting alternative to using exact computational methods. The fact remains, however, that a theoretical study of the increased
loss in accuracy in inference may be more challenging than ABC with exact computation - although the latter task is still not straight-forward.

\subsection{Online Parameter Inference}

This article has focussed upon time-series models, but the computational techniques discussed thus far have focussed on batch parameter inference. We now briefly discuss
methodologies that have been used to perform online parameter inference for ABC, that is as data arrive sequentially.

In the context of i.i.d.~and ODTS models one techniuqe that has been adopted in \cite{ehrlich1} is the SMC sampler method of \cite{chopin,delm:06}. This approach allows one to estimate the posterior density on $\theta$
as data arrive. The method is based on MCMC and typically the computational effort per time step grows with the time parameter, but the overall effort is only polynomial in $n$ - $\mathcal{O}(n^2)$.

For HMMs estimating the posterior density on $\theta$ is a notoriously difficult computational problem (for SMC methods), even when the likelihood is tractable; see \cite{doucet}. One possible approach is in \cite{chopin_smc}, but again the computational
cost per time step (which is polynomial in $n$) increases with the time parameter. Online parameter estimation methods (i.e.~whose cost does per-time step does not increase with the time parameter), based upon maximum likelihood can be found in \cite{ehrlich,yildrlim}; note that
\cite{yildrlim} has also been used for i.i.d.~data. \cite[pp.~435]{fearnhead} also considers a type of online estimation procedure for Lotka-Voltera models.

\section{Discussion}\label{sec:summary}

In this article we have discussed both ABC approximations and computational methods for a class of time series models found in many applications.
There are a number of important and interesting research directions, some of which we now list.

From a mathematical perspective there are several aspects which could be investigated. Firstly, many of the consistency and asymptotic normality results have been given for specific models.
However, one suspects that general theorems (for both classical and Bayesian consistency) for the entire class of models could be proved. In addition, in specific scenarios that have been discussed,
the mathematical assumptions could be significantly relaxed. This discussion concerns only the approximation, and as such suggests that one wishes $\epsilon$ to be as small as possible. As has been
remarked in Section \ref{sec:comp}, making $\epsilon$ arbitrarily small will make even the most sophisticated computational algorithms collapse. So there is a need to both analyze biases (which are deterministic)
and the computational errors (which are stochastic) specifically for problems in ABC - this could allow one to find (as done in \cite{fearnhead}) an optimal $\epsilon$ in specific scenarios 
to facilitate a trade off in the errors (a point also made in \cite{marin}). To further expand, the approach advocated by this author (and others see \cite{jasra3,marin,martin}), is to decompose
the error in approximation as follows; let $\pi(\xi)$ be the expectation of a real-valued $\pi-$integrable function $\xi$ under the true posterior, $\pi^{\epsilon}(\xi)$ the ABC approximation of it and $\pi^{\epsilon,N}(\xi)$ a Monte Carlo based (e.g.~via MCMC) approximation
of $\pi^{\epsilon}(\xi)$, then (for example) considering the $\mathbb{L}_1-$error, one has
$$
\mathbb{E}^N[|\pi^{\epsilon,N}(\xi)-\pi(\xi)|] \leq \mathbb{E}^N[|\pi^{\epsilon,N}(\xi)-\pi^{\epsilon}(\xi)|] + |\pi^{\epsilon}(\xi)-\pi(\xi)|
$$
where $\mathbb{E}^N$ is the expectation w.r.t.~the Monte Carlo randomness.
The idea is to then to separately deal with the two errors on the R.H.S.~the first of which is the Monte Carlo error and the second the bias (for finite number of data even noisy ABC will have a bias).
Some work has been done in \cite{martin} but under very restrictive mathematical assumptions
and less restrictive (and more complete) results are in \cite{calvet}, and in the cases of rejection sampling and regression adjustment \cite{biau,blum}, but some additional work is needed in general.

From a computational perspective, whilst we have presented some of the most up-to date methods in the literature, there is still a need to improve upon this. For example, one cannot guarantee that the alive particle
filter will fix all the issues with standard SMC for HMMs. Perhaps the direction to follow is as in \cite{chopin1}, where the notion of being approximate in computation also, is advocated, instead of a collection of even
more sophisticated tools which can be difficult to understand for statistical practitioners. Some work on statistical software for the class of time series models in this paper also appears to be required.

From an inferential perspective there are many issues to be considered. Firstly this paper has only considered, in the main, sub-classes of the time series models of interest. An increased consideration of other models
(such as bilinear time series models) would be of interest (mathematically, computationally and inferentially). Secondly, the article has only considered models for which the summary statistics are the data themselves.
As the demands of applications in `big data' and high-dimensions increases, these ideas will become less relevant and a consideration of this issue must be studied. 
As noted by \cite{chopin1}, the use of `local' statistics (for each data-point, in the case of this paper) may be a more manageable problem than the use of global statistics as often adopted in the literature.
Thirdly, the article has not considered the aspect of model
selection (for example for HMMs when $\mathsf{X}=\{1,\dots,k\}$ and one wishes to choose $k$), which is very challenging in the context of ABC \cite{robert2}. For the ABC approximation with identity summary statistics (see again \cite{robert2}), one
may suspect that this is just an exercise in application of existing computational methods (e.g.~\cite{zhou}) but in the scenario where summary statistics are adopted this problem will be far more challenging.

\subsubsection*{Acknowledgements}

The author was supported by Singapore MOE grants R-155-000-119-133 and R-155-000-143-112. The author is also affiliated with the risk management institute at the National University of Singapore.
The author thanks Pierre Del Moral, Kody Law and Andrew Stuart for conversations on this work as well as collaborators Sumeetpal Singh, Nikolas Kantas and David Nott for many conversations on ABC. In addition thanks to former
PhD students Zhang and Ehrlich for access to computer code.

\appendix

\section{Proofs}

\subsection{Convergence of ABC Posterior}\label{sec:prfs}

\begin{hypA}\label{hyp:1}
For $n\geq 1$, $y_{1:n}$   are fixed and the posterior exists
$$
\int_{\Theta\times\mathsf{X}^n}  p_{\theta}(y_{1:n}|x_{1:n})p_{\theta}(x_{1:n})\pi(\theta) dx_{1:n}d\theta < +\infty.
$$
\end{hypA}

\begin{proof}[Proof of Proposition \ref{prop:ep_conv}]
The proof is as follows: we will show that for any fixed $x_{1:n}\in\mathsf{X}^n$, $\theta\in\Theta$ that
\begin{equation}
\lim_{\epsilon \rightarrow 0} p_{\theta}^{\epsilon}(y_{1:n}|x_{1:n}) = p_{\theta}(y_{1:n}|x_{1:n})\label{eq:prf1}.
\end{equation}
where 
$$
 p_{\theta}^{\epsilon}(y_{1:n}|x_{1:n}) = \Big(\prod_{i=1}^n 
p_{\theta}^{\epsilon}(y_i|y_{1:i-1},x_i)
\Big) \quad\quad 
 p_{\theta}(y_{1:n}|x_{1:n}) = \Big(\prod_{i=1}^n 
p_{\theta}(y_i|y_{1:i-1},x_i)
\Big).
$$
Once this is achieved, the result is proved by the dominated convergence theorem on noting that one has, via (A\ref{hyp:1})
$$
\lim_{\epsilon\rightarrow 0} \int_{\Theta\times\mathsf{X}^n}\xi(\theta,x_{1:n}) p_{\theta}^{\epsilon}(y_{1:n}|x_{1:n})p_{\theta}(x_{1:n})\pi(\theta) dx_{1:n}d\theta = 
$$
$$
\int_{\Theta\times\mathsf{X}^n}\xi(\theta,x_{1:n}) p_{\theta}(y_{1:n}|x_{1:n})p_{\theta}(x_{1:n})\pi(\theta) dx_{1:n}d\theta
$$
so one can treat the numerators and denominators in Bayes theorem seperately to conclude the result. 

We now proceed to prove \eqref{eq:prf1}, which is done by induction on $n$. Let $n=1$, then we have
$$
p_{\theta}^{\epsilon}(y_1|x_1) - p_{\theta}(y_1|x_1) = 
\frac{1}{\int_{\{u\in\mathsf{Y}:\mathsf{d}(u,y_1)<\epsilon\}}du}\int_{\{u\in\mathsf{Y}:\mathsf{d}(u,y_1)<\epsilon\}}[p_{\theta}(u|x_1) - p_{\theta}(y_1|x_1)]du
$$
this will converge to zero as $\epsilon\rightarrow 0$ by Lebesgue's differentiation theorem (e.g.~\cite{wheedon}). Now assuming the result for $n-1$ we have
$$
 p_{\theta}^{\epsilon}(y_{1:n}|x_{1:n}) - p_{\theta}(y_{1:n}|x_{1:n}) = 
$$
$$
p_{\theta}^{\epsilon}(y_{1:n-1}|x_{1:n-1})[p_{\theta}^{\epsilon}(y_n|y_{1:n-1})  -p_{\theta}(y_n|y_{1:n-1})] +
p_{\theta}(y_n|y_{1:n-1})[p_{\theta}^{\epsilon}(y_{1:n-1}|x_{1:n-1}) - p_{\theta}(y_{1:n-1}|x_{1:n-1})].
$$
By the induction hypothesis, the second term on the R.H.S.~will converge to zero as $\epsilon\rightarrow 0$. The first term on the R.H.S.~will
converge to zero as the term $p_{\theta}^{\epsilon}(y_{1:n-1}|x_{1:n-1})$ converges (by the induction hypothesis) and the term
$[p_{\theta}^{\epsilon}(y_n|y_{1:n-1})  -p_{\theta}(y_n|y_{1:n-1})]$ goes to zero by almost the same argument for intialization of the induction.
This concludes the proof.
\end{proof}

\subsection{Consistency of Noisy ABC}\label{sec:prfs_abc}

\begin{hypA}\label{hyp:2}
\begin{enumerate}
\item{$\Theta$ is compact.}
\item{$p_{\theta}(y)$ is uniformly continuous (i.e.~over $y$) in $\theta$.}
\item{$p_{\theta}(y)=p_{\theta^{\star}}(y)$ if and only if $\theta=\theta^{\star}$.}
\item{There exist a $C<+\infty$ such that $\sup_{\theta,y}\log(p_{\theta}(y))\leq C$.}
\item{$\theta^{\star,\epsilon}$ exists and is unique.}
\end{enumerate}
\end{hypA}

We note that the assumptions are not intended to be weak, simply illustrative, to keep the proof to a short length.

\begin{proof}[Proof of Proposition \ref{prop:abc_ok}]
We prove $\widetilde{\theta}_n^{\epsilon}\stackrel{a.s.}{\rightarrow}\theta^{\star}$~first; the proof of first statement will follow. In the case of 2.~the $Z_i\stackrel{i.i.d.}{\sim}G_{\theta^{\star}}^{\epsilon}(\cdot)$, so one need only check that the conditions required for
consistency of the MLE are satisfied. We list 4 assumptions often used in the literature (see \cite{ferguson}):
\begin{enumerate}
\item{$\Theta$ is compact.}
\item{$p_{\theta}^{\epsilon}(y)$ is continuous in $\theta$ for every $y\in\mathsf{Y}$.}
\item{$p_{\theta}^{\epsilon}(y)=p_{\theta^{\star}}^{\epsilon}(y)$ if and only if $\theta=\theta^{\star}$.}
\item{There exist a $K(y)$, with $\int_{\mathsf{Y}}K(y)p_{\theta^{\star}}^{\epsilon}(y)dy<+\infty$ such that $\sup_{\theta}\log(p_{\theta}^{\epsilon}(y))\leq K(y)$.}
\end{enumerate}
If one can verify 1.-4.~above then one has $\widetilde{\theta}_n^{\epsilon}\stackrel{a.s.}{\rightarrow}\theta^{\star}$. 1.~holds by assumption. For 2.~let $\kappa>0$,
then one can find a $\delta>0$ such that for $|\theta_1-\theta_2|<\delta$ we have
$$
|p_{\theta_1}^{\epsilon}(y)-p_{\theta_2}^{\epsilon}(y)| = \frac{1}{\int_{\{u\in\mathsf{Y}:\mathsf{d}(u,y)<\epsilon\}}du}
\Big|\int_{\{u\in\mathsf{Y}:\mathsf{d}(u,y)<\epsilon\}}p_{\theta_1}(u)-p_{\theta_2}(u)du\Big| <\kappa
$$
where we have used (A\ref{hyp:2}-2); thus 2.~holds. For 3.~we have
$$
p_{\theta}^{\epsilon}(y) = \frac{1}{\int_{\{u\in\mathsf{Y}:\mathsf{d}(u,y)<\epsilon\}}du}\int_{\{u\in\mathsf{Y}:\mathsf{d}(u,y)<\epsilon\}}p_{\theta}(u)du 
$$
now the R.H.S.~is equal to $p_{\theta^{\star}}^{\epsilon}(y)$ if and only if $\theta=\theta^{\star}$ by (A\ref{hyp:2}-3); this proves 3. For 4.~consider
\begin{eqnarray*}
\log(p_{\theta}^{\epsilon}(y)) & = & \log\Bigg(\frac{1}{\int_{\{u\in\mathsf{Y}:\mathsf{d}(u,y)<\epsilon\}}du}\int_{\{u\in\mathsf{Y}:\mathsf{d}(u,y)<\epsilon\}}p_{\theta}(u)du\bigg) \\
& \leq & \frac{1}{\int_{\{u\in\mathsf{Y}:\mathsf{d}(u,y)<\epsilon\}}du}\int_{\{u\in\mathsf{Y}:\mathsf{d}(u,y)<\epsilon\}}\log(p_{\theta}(u))du \\
& \leq & C
\end{eqnarray*}
where we have applied Jensen's inequality to go to the second line and then used  (A\ref{hyp:2}-4) to go to the final line. Thus 4.~holds as $p_{\theta^{\star}}^{\epsilon}(y)$ is a probability density.
Hence we have proved that $\widetilde{\theta}_n^{\epsilon}\stackrel{a.s.}{\rightarrow}\theta^{\star}$. To prove that $\widehat{\theta}_n^{\epsilon}\stackrel{a.s.}{\rightarrow}\theta^{\star,\epsilon}$ 
one can apply \cite[Theorem 2.2]{white} for misspecified MLEs; assumptions (A1), (A2) and (A3a) (of that paper) follow by the above arguments (and our assumptions) and (A3b) follows by (A\ref{hyp:2}-5). This completes the proof of the proposition.
\end{proof}

\end{document}